\newcommand{\keywords}[1]{\par\addvspace\baselineskip
\noindent\keywordname\enspace\ignorespaces#1}
\newcommand{\bm}[1]{\mbox{\boldmath{$#1$}}}
\newcommand{\argmax}{\operatornamewithlimits{argmax}}
\newcommand{\argmin}{\operatornamewithlimits{argmin}}
\newtheorem{thm}{Theorem}
\newtheorem{lem}{Lemma}
\newtheorem{clm}{Claim}
\begin{document}


\mainmatter

\title{Improved Algorithms for Multiple Sink Location Problems in Dynamic Path Networks}

\author{
Yuya Higashikawa \inst{1}
\and Mordecai J.~Golin \inst{2}
\and Naoki Katoh \inst{1} \thanks{Supported by JSPS Grant-in-Aid for Scientific Research(A)(25240004)}
}

\institute{
Department of Architecture and Architectural Engineering, Kyoto University, Japan, 
\{as.higashikawa, naoki\}@archi.kyoto-u.ac.jp 
\and 
Department of Computer Science and Engineering, The Hong Kong University of Science and Technology, Hong Kong,
golin@cs.ust.hk 
}

\maketitle

\begin{abstract}
This paper considers the $k$-sink location problem in dynamic path networks.
In our model, a dynamic path network consists of an undirected path with positive edge lengths, uniform edge capacity, and positive vertex supplies.
Here, each vertex supply corresponds to a set of evacuees.
Then, the problem requires to find the optimal location of $k$ sinks in a given path so that each evacuee is sent to one of $k$ sinks.
Let ${\bm x}$ denote a $k$-sink location.
Under the optimal evacuation for a given ${\bm x}$, 
there exists a $(k-1)$-dimensional vector ${\bm d}$, called $(k-1)$-divider, 
such that each component represents the boundary dividing all evacuees between adjacent two sinks into two groups, 
i.e., all supplies in one group evacuate to the left sink and all supplies in the other group evacuate to the right sink.
Therefore, the goal is to find ${\bm x}$ and ${\bm d}$ which minimize the maximum cost or the total cost,
which are denoted by the minimax problem and the minisum problem, respectively.
We study the $k$-sink location problem in dynamic path networks with continuous model,
and prove that the minimax problem can be solved in $O(kn)$ time and the minisum problem can be solved in $O(n^2 \cdot \min \{ k, 2^{\sqrt{\log k \log \log n}}\})$ time,
where $n$ is the number of vertices in the given network.
Note that these improve the previous results by \cite{hgk14_2}.
\keywords{sink location, dynamic network, evacuation planning}
\end{abstract}

\section{Introduction}
The Tohoku-Pacific Ocean Earthquake happened in Japan on March 11, 2011, 
and many people failed to evacuate and lost their lives due to severe attack by tsunamis. 
From the viewpoint of disaster prevention from city planning and evacuation planning,  
it has now become extremely important to establish effective evacuation planning systems against large scale disasters. In particular, 
arrangements of tsunami evacuation buildings in large Japanese cities near the coast has become an urgent issue. 
To determine appropriate tsunami evacuation buildings, we need to consider where evacuation buildings are assigned 
and how to partition a large area into small regions so that one evacuation building is designated in each region. 
This produces several theoretical issues to be considered. 
Among them, this paper focuses on the location problem of multiple evacuation buildings 
assuming that we fix the region such that all evacuees in the region are planned to evacuate to one of these buildings. 
In this paper, we consider the simplest case for which the region consists of a single road.

In order to represent the evacuation, we consider the {\it dynamic} setting in graph networks, which was first introduced by Ford et al.~\cite{ff58}.
In a graph network under the dynamic setting, each vertex is given supply and each edge is given length and capacity which limits the rate of the flow into the edge per unit time.
We call such networks under the dynamic setting {\it dynamic networks}.
Dynamic networks can be considered in discrete and continuous models.
In discrete model, each input value is given as an integer.
Then each supply can be regarded as a set of evacuees, and edge capacity is defined as the maximum number of evacuees who can enter an edge per unit time.
On the other hand, in continuous model, each input value is given as a real number.
Then each supply can be regarded as fluid, and edge capacity is defined as the maximum amount of supply which can enter an edge per unit time.
In either model, we assume that all supply at a vertex is sent to the same sink.
{\it The $k$-sink location problem in dynamic networks} is defined as the problem which requires to find the optimal location of $k$ sinks in a given network 
so that all supply of each vertex is sent to one of $k$ sinks in the shortest time.

For the 1-sink location problem in dynamic networks, the following two criteria can be naturally considered: {\it maximum cost criterion} and {\it total cost criterion}
(in static networks, these criteria correspond to the center problem and the median problem in facility location, respectively).
If a sink location $x$ is given in a dynamic network with discrete model, 
the cost of $x$ for an evacuee is defined as the minimum time required to send him/her to $x$
(by taking into account the congestion).
Then two criteria are defined as the maximum of cost of $x$ for all evacuees and the sum of cost of $x$ for all evacuees, respectively.
Now let us turn to continuous model. 
In continuous model, we define the {\it unit} as the infinitesimally small portion of supply,
then the cost is defined on each unit.
If a sink location $x$ is given in a dynamic network with continuous model, 
the cost of $x$ for a unit is defined as the minimum time required to send the unit to $x$.
Also two criteria are defined as the maximum of cost of $x$ for all units and the sum of cost of $x$ for all units, respectively.
Definitions for $k$-sink location problem are given later.
Then, {\it the minimax} (resp. {\it minisum}) {\it $k$-sink location problem in dynamic networks} requires to find a $k$-sink location in a given dynamic network which minimizes the maximum (resp. total) cost.
Mamada et al.~\cite{mumf06} studied the minimax 1-sink location problem in dynamic tree networks with discrete model assuming that the sink must be located at a vertex,
and proposed an $O(n \log^2 n)$ time algorithm. 
Higashikawa et al.~\cite{hgk14} also studied the same problem as \cite{mumf06} assuming that edge capacity is uniform and the sink can be located at any point in the network,
and proposed an $O(n \log n)$ time algorithm. 
Recently, Higashikawa et al.~\cite{hgk14_2} studied the $k$-sink location problems in a dynamic path network with continuous model assuming that edge capacity is uniform and the sink can be located at any point in the network, and proved that the minimax problem can be solved in $O(kn \log n)$ time and the minisum problem can be solved in $O(kn^2)$ time.

In this paper, we study the same problems as \cite{hgk14_2}, and improve the previous time bounds: 
$O(kn \log n)$ to $O(kn)$ for the minimax problem and $O(kn^2)$ to $O(n^2 \cdot \min \{ k, 2^{\sqrt{\log k \log \log n}}\})$ for the minisum problem.


\section{Minimax $k$-sink location problem}
\label{sec:minimax}

\subsection{Preliminaries}
\label{sec:mmp}

\subsubsection{Model definition:}
Let $P =(V, E)$ be an undirected path where $V = \{ v_1, v_2,$ $\ldots, v_n \}$ and $E = \{ e_1, e_2$, $\ldots, e_{n-1} \}$ 
such that $v_i$ and $v_{i+1}$ are endpoints of $e_i$ for $1 \le i \le n-1$.
Let $\mathcal{N} = (P, l, w, c, \tau)$ be a dynamic network 
with the underlying graph being a path $P$,
$l$ is a function that associates each edge $e_i$ with positive length $l_i$, 
$w$ is also a function that associates each vertex $v_i$ with positive weight $w_i$ representing the amount of supply at $v_i$,
$c$ is a positive constant representing the amount of supply which can enter an edge per unit time,
and $\tau$ is also a constant representing the time required by flow for traversing the unit distance.
We call such networks with path structures {\it dynamic path networks}.
In the following, we use the notation $P$ to denote the set of all points $p \in P$.
Also, for a vertex $v_i \in P$ with $1 \le i \le n$, we abuse the notation $v_i$ to denote the distance from $v_1$ to $v_i$,
and for a point $p \in P$, we abuse the notation $p$ to denote the distance from $v_1$ to $p$.
Then, we can regard $P$ as embedded on a real line such that $v_1 = 0$.
For two points $p, q \in P$ with $p < q$, 
let $[p, q]$ (resp. $[p, q)$, $(p, q]$ and $(p, q)$) denote the part of $P$ which consists of all points $x \in P$ 
such that $p \le x \le q$ (resp. $p \le x < q$, $p < x \le q$ and $p < x < q$).

\subsubsection{$k$-sink location and $(k-1)$-divider:}
Suppose that $k$ sinks are located at points $x_1, x_2, \ldots, x_k \in P$ such that $x_1 \le x_2 \le \ldots \le x_k$, respectively.
Note that each sink can be located at any point in $P$.
In this paper, we assume that if we place a sink at a vertex, all supply of the vertex can finish the evacuation in no time.
So, without loss of generality, we assume $k \le n$
(otherwise, at least one sink can be located at each vertex). 
Let ${\bm x} = (x_1, x_2, \ldots, x_k)$ which is a $k$-dimensional vector, called {\it $k$-sink location}.
Let us consider the optimal evacuation for a given ${\bm x}$.
In this paper, we assume that all units of a vertex are sent to the same sink.
We call a directed path along which all units of a vertex are sent to a sink {\it evacuation path}.
Then, any two evacuation paths never cross each other in an optimal evacuation
(otherwise, we can realize the better or equivalent evacuation by exchanging the two destinations of crossing evacuation paths).
Suppose that there exists only one vertex $v_j$ in $[x_i, x_{i+1}]$ and all units of the vertex are sent to $x_i$,
then $x_{i+1}$ can be moved to $v_{j+1}$ without increasing the cost of any unit.
Therefore, if we optimally locate $k$ sinks with $k \ge 2$, there exist at least two vertices in $[x_i, x_{i+1}]$ for any $i$ with $1 \le i \le k-1$, i.e.,
there exist two vertices $v_j$ and $v_{j+1}$ with $1 \le j \le n-1$ in $[x_i, x_{i+1}]$
such that all supplies on $[x_i, v_j]$ are sent to $x_i$ and
all supplies on $[v_{j+1}, x_{i+1}]$ are sent to $x_{i+1}$.
We call such a vertex $v_j$ {\it dividing vertex}.
For an integer $i$ with $1 \le i \le k-1$ with $k \ge 2$, let $d_i$ be an index of the dividing vertex in $[x_i, x_{i+1})$.
By the above discussion, $d_{i-1}+1 \le d_i$ holds for $1 \le i \le k$ where $d_0 = -1$ and $d_k = n$.
Let ${\bm d} = (d_1, d_2, \ldots, d_{k-1})$ which is a $(k-1)$-dimensional vector, called {\it $(k-1)$-divider}.
For a given ${\bm d}$, we need only consider ${\bm x}$ such that $x_i $ is given on $[v_{d_{i-1}+1}, v_{d_i}]$ for $1 \le i \le k$, where $d_0 = 0$ and $d_k = n$.

\subsubsection{Problem definition:}
For given ${\bm x}$ and ${\bm d}$,
and also for an integer $i$ with $1 \le i \le k$, let $\Theta_i({\bm x}, {\bm d})$ denote the minimum time required to send all supplies on $[v_{d_{i-1}+1}, v_{d_i}]$ to $x_i$, where $d_0 = 0$ and $d_k = n$.
Letting $\Theta({\bm x}, {\bm d}) = \max \{ \Theta_i({\bm x}, {\bm d}) \mid 1 \le i \le k \}$, the minimax $k$-sink location problem is defined as follows:
\begin{eqnarray}
{\rm Q_{minimax}}(P): \ {\rm minimize} \ \left\{ \Theta({\bm x}, {\bm d}) \mid {\bm x} \in P^k \ {\rm and} \ {\bm d} \in \{ 1, 2, \ldots, n \}^{k-1} \right\}.
\label{pro1}
\end{eqnarray}


\subsection{Recursive formulation}
We now consider a subproblem of the above mentioned problem:
for some integers $i, j$ and $p$ with $1 \le i \le j \le n$ and $1 \le p \le k$, the $p$-sink location problem in $[v_i, v_j]$.
For $[v_i, v_j]$, let ${\bm x}^*(p, i, j)$ denote the optimal $p$-sink location and ${\bm d}^*(p, i, j)$ denote the optimal $(p-1)$-divider.
Note that ${\bm x}^*(p, i, j)$ is a $p$-dimensional vector and ${\bm d}^*(p, i, j)$ is also a $(p-1)$-dimensional vector,
so ${\bm d}^*(p, i, j)$ is not defined for $p=1$.
Also, let ${\sf OPT}(p, i, j)$ denote the optimal cost of $p$-sink location in $[v_i, v_j]$, 
i.e., the minimum time required to send all supplies on $[v_i, v_j]$ divided by ${\bm d}^*(p, i, j)$ to ${\bm x}^*(p, i, j)$.
Note that if $p \ge j-i+1$ holds, the optimal sink location is trivial, i.e., ${\sf OPT}(p, i, j) = 0$.

Next, we show the recursive formula of ${\sf OPT}(p, i, j)$.
For integers $i, j$ and $p$ with $1 \le i \le j \le n$ and $1 \le p \le k-1$,
let us consider the optimal $(p+1)$-sink location and $p$-divider for $[v_i, v_j]$, i.e., ${\bm x}^*(p+1, i, j)$ and ${\bm d}^*(p+1, i, j)$.
Since any two evacuation paths never cross each other in an optimal evacuation,
there exists an integer $h$ with $i \le h \le j-1$ such that
all supplies on $[v_{h+1}, x_j]$ are sent to the rightmost sink
and all supplies on $[x_i, v_h]$ are sent to the other $k$ sinks.
Thus, we have the following recursion: 
\begin{eqnarray}
{\sf OPT}(p+1, i, j) = \min_{i \le h \le j-1} \max \{ {\sf OPT}(p, i, h), {\sf OPT}(1, h+1, j) \}.
\label{eq0.0}
\end{eqnarray}
Here, let $d$ be an integer which minimizes the maximum of ${\sf OPT}(p, i, h)$ and ${\sf OPT}(1, h+1, j)$ on $i \le h \le j-1$:
\begin{eqnarray}
d = \argmin_{i \le h \le j-1} \max \{ {\sf OPT}(p, i, h), {\sf OPT}(1, h+1, j) \}.
\label{eq0.1}
\end{eqnarray}
Then, ${\bm x}^*(p+1, i, j)$ and ${\bm d}^*(p+1, i, j)$ can be represented by using $d$ as follows:
\begin{eqnarray}
{\bm x}^*(p+1, i, j) &=& ({\bm x}^*(p, i, d), {\bm x}^*(1, d+1, j)), \label{eq0.2} \\
{\bm d}^*(p+1, i, j) &=& ({\bm d}^*(p, i, d), d). \label{eq0.3}
\end{eqnarray}


\subsection{Known properties of 1-sink location problem}
Here, we introduce the properties of 1-sink location problem, which were explicitly shown in \cite{hgk14_2} (based on \cite{chknsx13,hacgknsx14}). 
For fixed integers $i$ and $j$ with $1 \le i \le j \le n$, let us consider how to compute the optimal 1-sink location in $[v_i, v_j]$.
Suppose that a sink is located at a point $x$ in $[v_i, v_j]$.
Let $\Theta_{i, j}(x)$ denote the minimum time required to send all supplies on $[v_i, v_j]$ to $x$.
Here, let $L_i(x)$ (resp. $R_j(x)$) denote the minimum time required to send all supplies on $[v_i, x]$ (resp. $[x, v_j]$) to $x$ where $L_i(v_i) = 0$ and $R_j(v_j) = 0$.
Then, $\Theta_{i, j}(x)$ is the maximum of $L_i(x)$ and $R_j(x)$,
i.e., 
\begin{eqnarray}
\Theta_{i, j}(x) = \max \{ L_i(x), R_j(x) \}. \label{eq4}
\end{eqnarray}
For discrete model, Kamiyama et al. \cite{kkt06} showed that
$L_i(x)$ and $R_j(x)$ are expressed as follows:
\begin{eqnarray*}
L_i(x) &=& \max_l \left\{ \tau(x - v_l) + \bigg\lceil \frac{\sum_{i \le h \le l} w_h}{c} \bigg\rceil - 1 \ \bigg| \ v_l \in [v_i, x) \right\},  \\
R_j(x) &=& \max_l \left\{ \tau(v_l - x) + \bigg\lceil \frac{\sum_{l \le h \le j} w_h}{c} \bigg\rceil - 1 \ \bigg| \ v_l \in (x, v_j] \right\}.
\end{eqnarray*}
From these, we can immediately develop the formulae for continuous model as follows:
\begin{eqnarray}
L_i(x) &=& \max_l \left\{ \tau(x - v_l) + \frac{\sum_{i \le h \le l} w_h}{c} \ \bigg| \ v_l \in [v_i, x) \right\}, \label{eq3.1} \\
R_j(x) &=& \max_l \left\{ \tau(v_l - x) + \frac{\sum_{l \le h \le j} w_h}{c} \ \bigg| \ v_l \in (x, v_j] \right\}. \label{eq3.2}
\end{eqnarray}
Note that $L_i(x)$ (resp. $R_j(x)$) is a piecewise linear strictly increasing (resp. decreasing) function of $x$.
Therefore, a function $\Theta_{i, j}(x)$ is unimodal in $x$, and there exists the unique point which minimizes $\Theta_{i, j}(x)$, they is, ${\bm x}^*(1, i, j)$.
Then, as \cite{chknsx13,hacgknsx14,hgk14_2} showed, we immediately have the following claim.
\begin{clm}
For any integers $i$ and $j$ with $1 \le i \le j \le n$ and a point $x \in [v_i, v_j]$, \\
{\rm (i)} if $L_i(x) \le R_j(x)$ holds, ${\bm x}^*(1, i, j) \ge x$ holds, and \\
{\rm (ii)} if $L_i(x) \ge R_j(x)$ holds, ${\bm x}^*(1, i, j) \le x$ holds.
\label{clm2}
\end{clm}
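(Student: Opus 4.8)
The plan is to reduce both statements to the sign behavior of the single function $g(x) := L_i(x) - R_j(x)$ on $[v_i,v_j]$, and to pin down ${\bm x}^*(1,i,j)$ precisely as the unique zero of $g$. Everything then follows from strict monotonicity.

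First I would record that $g$ is continuous and \emph{strictly increasing}: by the properties quoted above, $L_i$ is piecewise linear and strictly increasing while $R_j$ is piecewise linear and strictly decreasing, so their difference is continuous and strictly increasing. Next I would check the endpoint values. Since $L_i(v_i)=0$ and $R_j(v_i)\ge 0$ (the latter being the time to route all of $[v_i,v_j]$ to $v_i$), we get $g(v_i)\le 0$; symmetrically $R_j(v_j)=0$ and $L_i(v_j)\ge 0$ give $g(v_j)\ge 0$. By the intermediate value theorem together with strict monotonicity, there is a \emph{unique} point $x^*\in[v_i,v_j]$ with $g(x^*)=0$, i.e.\ $L_i(x^*)=R_j(x^*)$. (The degenerate case $i=j$ is immediate, as the interval is a single point and all quantities vanish.)

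Second I would identify $x^*$ with the optimal sink ${\bm x}^*(1,i,j)$. For $x<x^*$ strict monotonicity gives $g(x)<0$, so $R_j(x)>L_i(x)$ and hence $\Theta_{i,j}(x)=\max\{L_i(x),R_j(x)\}=R_j(x)$ is strictly decreasing there; for $x>x^*$ we have $g(x)>0$, so $\Theta_{i,j}(x)=L_i(x)$ is strictly increasing. Thus $\Theta_{i,j}$ attains its unique minimum exactly at $x^*$, which matches the unique minimizer ${\bm x}^*(1,i,j)$ whose existence is asserted in the text; hence ${\bm x}^*(1,i,j)=x^*$.

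With this characterization both implications are one line. If $L_i(x)\le R_j(x)$ then $g(x)\le 0=g(x^*)$, and strict monotonicity of $g$ forces $x\le x^*={\bm x}^*(1,i,j)$, which is (i). Symmetrically, $L_i(x)\ge R_j(x)$ gives $g(x)\ge g(x^*)$ and hence $x\ge x^*$, which is (ii). The only step needing any care — and the one I would treat as the main (though modest) obstacle — is verifying that the crossing point of $L_i$ and $R_j$ is genuinely the minimizer rather than merely an incidental feature; strict monotonicity of each of the two pieces is exactly what rules out plateaus and ties, guaranteeing a single crossing that is the minimizer, so no further case analysis is required.
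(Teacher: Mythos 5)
Your argument has a genuine gap at its foundation: $L_i$ and $R_j$ are \emph{not} continuous, so the function $g=L_i-R_j$ need not have a zero, and the minimizer ${\bm x}^*(1,i,j)$ need not be a point where $L_i$ and $R_j$ cross. Looking at definitions (\ref{eq3.1}) and (\ref{eq3.2}), when $x$ moves rightward across a vertex $v_l$, a new term $\tau(x-v_l)+\sum_{i\le h\le l}w_h/c$ enters the maximum defining $L_i$, so $L_i$ can jump upward at vertices (for instance $L_i(v_i)=0$ but $L_i(x)\ge w_i/c$ for every $x>v_i$); symmetrically $R_j$ can jump downward. Concretely, take $\tau=c=1$, $v_1=0$, $v_2=1$, $w_1=100$, $w_2=1$, $i=1$, $j=2$: then $g(0)=0-2=-2$, while $g(x)=(x+100)-(2-x)=2x+98>0$ for all $x\in(0,1)$ and $g(1)=101$, so $g$ has no zero at all; the unique minimizer of $\Theta_{1,2}$ is $v_1$, a point where $L_1<R_2$ strictly. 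This is not a marginal technicality in this model: cases (ii) and (iii) of Claim~\ref{clm2.1} (where $\alpha^*<0$ or $\alpha^*>1$) exist precisely because the optimum can sit at a vertex at which $L_i\neq R_j$. Hence both your intermediate value step and your identification ``minimizer $=$ unique zero of $g$'' fail.

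The claim itself never needed a crossing point; it follows from strict monotonicity alone, which is why the paper states it as immediate (citing \cite{chknsx13,hacgknsx14,hgk14_2}) rather than proving it. For (i): if $L_i(x)\le R_j(x)$, then $\Theta_{i,j}(x)=\max\{L_i(x),R_j(x)\}=R_j(x)$, and for any $y<x$ the strict decrease of $R_j$ gives $\Theta_{i,j}(y)\ge R_j(y)>R_j(x)=\Theta_{i,j}(x)$; so no point to the left of $x$ can be the minimizer, i.e.\ ${\bm x}^*(1,i,j)\ge x$. Part (ii) is symmetric, using $\Theta_{i,j}(x)=L_i(x)$ and the strict increase of $L_i$ for $y>x$. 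To repair your write-up, delete the continuity/IVT paragraph and the characterization of ${\bm x}^*(1,i,j)$ as a zero of $g$, and replace them with this two-line comparison argument.
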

In the following, when $x$ is at a vertex $v_t$ with $i \le t \le j$,
we use the notation $L(i, t)$ (resp. $R(t, j)$) to denote the value $L_i(v_t)$ (resp. $R_j(v_t)$).
Then, we have the following claim (which was also shown in \cite{chknsx13,hacgknsx14,hgk14_2}).
\begin{clm}
For given integers $i$ and $j$ with $1 \le i \le j \le n$, 
suppose that for the interval $[v_l, v_{l+1}]$ with $i \le l \le j-1$, $L(i, l) \le R(l, j)$ and $L(i, l+1) \ge R(l+1, j)$ hold,
and let $\alpha^*$ denote the solution to an equation for $\alpha$: $R(l, j) - \alpha\tau(v_{l+1} - v_l) = L(i, l+1) - (1 - \alpha)\tau(v_{l+1} - v_l)$.
Then, \\
{\rm (i)} if $1 \le \alpha^* \le 1$ holds, ${\bm x}^*(1, i, j)$ is a point dividing the interval $[v_l, v_{l+1}]$ with the ratio of $\alpha^*$ to $1-\alpha^*$ 
and ${\sf OPT}(1, i, j) = R(l, j) - \alpha^*\tau(v_{l+1} - v_l)$ holds, \\
{\rm (ii)} if $\alpha^* < 0$ holds, ${\bm x}^*(1, i, j) = v_l$ and ${\sf OPT}(1, i, j) = R(l, j)$ hold, and \\
{\rm (iii)} if $\alpha^* > 1$ holds, ${\bm x}^*(1, i, j) = v_{l+1}$ and ${\sf OPT}(1, i, j) = L(i, l+1)$ hold. \\
\label{clm2.1}
\end{clm}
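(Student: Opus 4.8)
The plan is to first localize the optimal sink and then reduce the problem to minimizing the maximum of two affine functions over the single interval $[v_l,v_{l+1}]$. Applying Claim~\ref{clm2}(i) at $x=v_l$, the hypothesis $L(i,l)\le R(l,j)$ yields ${\bm x}^*(1,i,j)\ge v_l$, and applying Claim~\ref{clm2}(ii) at $x=v_{l+1}$, the hypothesis $L(i,l+1)\ge R(l+1,j)$ yields ${\bm x}^*(1,i,j)\le v_{l+1}$. Hence the optimal sink lies in $[v_l,v_{l+1}]$, and since $\Theta_{i,j}$ is unimodal it suffices to minimize $\Theta_{i,j}(x)=\max\{L_i(x),R_j(x)\}$ over this interval.

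Next I would make the affine structure explicit. Since no vertex lies strictly between $v_l$ and $v_{l+1}$, on the open interval the maximum in (\ref{eq3.1}) ranges only over $v_i,\dots,v_l$ while that in (\ref{eq3.2}) ranges only over $v_{l+1},\dots,v_j$; consequently $L_i$ is a single affine piece of slope $+\tau$ and $R_j$ a single affine piece of slope $-\tau$ there. Writing $x_\alpha=v_l+\alpha(v_{l+1}-v_l)$ for the point dividing $[v_l,v_{l+1}]$ in ratio $\alpha:(1-\alpha)$, these forms give $R_j(x_\alpha)=R(l,j)-\alpha\tau(v_{l+1}-v_l)$ and $L_i(x_\alpha)=L(i,l+1)-(1-\alpha)\tau(v_{l+1}-v_l)$, so the equation defining $\alpha^*$ is precisely $L_i(x_{\alpha^*})=R_j(x_{\alpha^*})$; that is, $\alpha^*$ locates the crossing of the increasing line $L_i$ with the decreasing line $R_j$.

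The three cases then follow from elementary reasoning about the convex function $\max\{L_i,R_j\}$, which first decreases along $R_j$ and then increases along $L_i$. Its minimum is attained at the crossing whenever the crossing lies inside the interval, i.e. $0\le\alpha^*\le1$; this is case~(i), with minimizer $x_{\alpha^*}$ and optimal value $R(l,j)-\alpha^*\tau(v_{l+1}-v_l)$. If $\alpha^*<0$ the two lines meet to the left of $v_l$, so $L_i\ge R_j$ throughout the interval and $\Theta_{i,j}$ is increasing there, pushing the minimizer to the left endpoint $v_l$ (case ii); the case $\alpha^*>1$ is symmetric and gives $v_{l+1}$ (case iii).

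The step requiring the most care is the treatment of the two endpoints, since $L_i$ and $R_j$ are only piecewise linear and jump as the sink passes a vertex: at $v_l$ the supply $w_l$ joins the congestion only once the sink moves strictly rightward, so $L(i,l)=L_i(v_l)$ can lie strictly below the right-hand limit of $L_i$, and symmetrically for $R_j$ at $v_{l+1}$. In cases (ii) and (iii) I must therefore confirm that the \emph{true} boundary value, namely $R(l,j)$ at $v_l$ and $L(i,l+1)$ at $v_{l+1}$, is the one actually attained rather than the extrapolated line value. Unwinding the definition of $\alpha^*$ shows exactly that $\alpha^*<0$ is equivalent to $R(l,j)<L(i,l+1)-\tau(v_{l+1}-v_l)$, which places the genuine minimum at $v_l$ with value $R(l,j)$, and analogously $\alpha^*>1$ places it at $v_{l+1}$ with value $L(i,l+1)$, completing the three cases.
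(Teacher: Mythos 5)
Your proof is correct. There is actually nothing in the paper to compare it against: Claim~\ref{clm2.1} is stated without proof, imported as a known fact from \cite{chknsx13,hacgknsx14,hgk14_2}, so your argument fills a gap the paper leaves to its references. Your route is the natural one: localize ${\bm x}^*(1,i,j)$ to $[v_l,v_{l+1}]$ by applying Claim~\ref{clm2} at the two endpoints, observe that on the open interval the maxima in (\ref{eq3.1}) and (\ref{eq3.2}) range over fixed vertex sets so that $L_i$ and $R_j$ are single affine pieces of slopes $+\tau$ and $-\tau$ agreeing with $L(i,l+1)$ at $v_{l+1}$ and with $R(l,j)$ at $v_l$ respectively, and then do the three-way case analysis on where the two lines cross. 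You also correctly isolated the one genuinely delicate point, which the statement's clean phrasing hides: $L_i$ jumps up as $x$ passes $v_l$ rightward and $R_j$ jumps up as $x$ passes $v_{l+1}$ leftward, so in cases (ii) and (iii) the endpoint value of $\Theta_{i,j}$ is \emph{not} the extrapolated line value; your check that $\Theta_{i,j}(v_l)=R(l,j)$ (using the hypothesis $L(i,l)\le R(l,j)$, so the discontinuous function $L_i$ is not the binding one there) and symmetrically $\Theta_{i,j}(v_{l+1})=L(i,l+1)$ is exactly what makes cases (ii) and (iii) go through, and your algebraic equivalences $\alpha^*<0 \Leftrightarrow R(l,j)<L(i,l+1)-\tau(v_{l+1}-v_l)$ and $\alpha^*>1 \Leftrightarrow R(l,j)-\tau(v_{l+1}-v_l)>L(i,l+1)$ are right. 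One cosmetic remark: the condition ``$1\le\alpha^*\le 1$'' in case (i) is a typo in the paper for ``$0\le\alpha^*\le 1$'', and your reading of it as such is the intended one; your argument also covers the boundary subcases $\alpha^*\in\{0,1\}$, where the crossing coincides with an endpoint and the stated formula for ${\sf OPT}(1,i,j)$ still holds.
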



\subsection{Key properties of $k$-sink location problem}
\label{sec:mmkp}
In this section, we show several key properties of the $k$-sink location problem.
Here, 
for integers $p$ and $i$ with $2 \le p \le k$ and $2 \le i \le n$, 
let $f_{p, i}(t)$ denote a function defined on $\{ t \in \mathbb{Z} \mid 1 \le t \le i-1 \}$:
\begin{eqnarray}
f_{p, i}(t) = \max \{ {\sf OPT}(p-1, 1, t), {\sf OPT}(1, t+1, i) \}.
\label{eq1}
\end{eqnarray}
Note that for fixed $p$ and $i$, ${\sf OPT}(p-1, 1, t)$ is monotonically increasing in $t$ and ${\sf OPT}(1, t+1, i)$ is monotonically decreasing in $t$.
Thus, we have the following claim.
\begin{clm}
For any integers $p$ and $i$ with $2 \le p \le k$ and $2 \le i \le n$,
function $f_{p, i}(t)$ is unimodal in $t$ on $1 \le t \le i-1$.
\label{clm1}
\end{clm}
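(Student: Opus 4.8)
The plan is to reduce the claim to the elementary fact that the pointwise maximum of a nondecreasing function and a nonincreasing function is unimodal (valley-shaped). Write $A(t) = {\sf OPT}(p-1, 1, t)$ and $B(t) = {\sf OPT}(1, t+1, i)$, so that $f_{p,i}(t) = \max\{A(t), B(t)\}$ on $\{t \in \mathbb{Z} \mid 1 \le t \le i-1\}$. As noted immediately before the claim, $A$ is monotonically nondecreasing and $B$ is monotonically nonincreasing in $t$, and I would take these two monotonicity properties as the starting point. (Intuitively they hold because enlarging the interval $[v_1, v_t]$ can only increase the optimal cost of a $(p-1)$-sink placement, while shrinking $[v_{t+1}, v_i]$ from the left can only decrease the optimal cost of a single sink.)

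First I would consider the difference $g(t) = A(t) - B(t)$. Since $A$ is nondecreasing and $-B$ is nondecreasing, $g$ is nondecreasing in $t$. Hence there is a threshold index $t^*$, which I take to be the largest $t \in \{1, \ldots, i-1\}$ with $A(t) \le B(t)$ (using the convention $t^* = 0$ if no such $t$ exists), such that $A(t) \le B(t)$ for all $t \le t^*$ and $A(t) > B(t)$ for all $t > t^*$. The monotonicity of $g$ is precisely what guarantees that this split into a "$B$-dominant'' prefix and an "$A$-dominant'' suffix is clean, with no interleaving.

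Next I would use this threshold to read off the shape of $f_{p,i}$ directly. For $t \le t^*$ we have $f_{p,i}(t) = B(t)$, which is nonincreasing, and for $t > t^*$ we have $f_{p,i}(t) = A(t)$, which is nondecreasing. Therefore $f_{p,i}$ is nonincreasing on $\{1, \ldots, t^*\}$ and nondecreasing on $\{t^*, \ldots, i-1\}$ (the two descriptions agree at $t = t^*$), which is exactly what it means for $f_{p,i}$ to be unimodal on $\{1, \ldots, i-1\}$, completing the argument.

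I do not expect a genuine obstacle here; the only point that requires a little care is the bookkeeping at the transition. One must handle the degenerate cases in which the threshold does not lie strictly inside the range, namely when $A(t) \le B(t)$ throughout (so $f_{p,i} = B$ is globally nonincreasing) or when $A(t) > B(t)$ throughout (so $f_{p,i} = A$ is globally nondecreasing), and verify that these still count as unimodal. Fixing $t^*$ as the last index with $A(t) \le B(t)$, together with the empty convention $t^* = 0$, makes both extremes drop out as special cases of the general description above, so no separate treatment is needed.
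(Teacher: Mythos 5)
Your proposal is correct and matches the paper's reasoning: the paper also derives the claim directly from the monotonicity of ${\sf OPT}(p-1,1,t)$ in $t$ and of ${\sf OPT}(1,t+1,i)$ in $t$, treating the unimodality of the maximum of a nondecreasing and a nonincreasing function as immediate. You simply make explicit (via the threshold $t^*$ and the degenerate cases) the elementary step the paper leaves unstated.
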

Let $d_{p, i}$ be an integer which minimizes $f_{p, i}(t)$ for $1 \le t \le i-1$:
\begin{eqnarray}
d_{p, i} = \argmin_{1 \le t \le i-1} f_{p, i}(t).
\label{eq1.1}
\end{eqnarray}
By Claim \ref{clm1}, there uniquely exists $d_{p, i}$.
By (\ref{eq0.2}) and (\ref{eq0.3}), we have
\begin{eqnarray}
{\bm x}^*(p, 1, i) &=& ({\bm x}^*(p-1, 1, d_{p, i}), {\bm x}^*(1, d_{p, i} + 1, i)), \label{eq2.1} \\
{\bm d}^*(p, 1, i) &=& ({\bm d}^*(p-1, 1, d_{p, i}), d_{p, i}). \label{eq2.2}
\end{eqnarray}
Then, we prove the following two lemmas.
\begin{lem}
For any integers $p$ and $i$ with $2 \le p \le k$ and $2 \le i \le n-1$,
$d_{p, i} \le d_{p, i+1}$ holds.
\label{lem1}
\end{lem}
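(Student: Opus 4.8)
The plan is to treat the two terms inside the maximum in (\ref{eq1}) separately and to show that, as $i$ increases by one, the minimizer of $f_{p,i}$ can only move to the right. Write $g(t) = {\sf OPT}(p-1,1,t)$ and $h_i(t) = {\sf OPT}(1,t+1,i)$, so that $f_{p,i}(t) = \max\{g(t), h_i(t)\}$. The function $g$ does not depend on $i$ and is non-decreasing in $t$, while $h_i$ is non-increasing in $t$; these are exactly the monotonicities noted before Claim \ref{clm1}. The one extra ingredient I need is monotonicity in the parameter $i$: for every admissible $t$,
\begin{eqnarray}
h_{i+1}(t) = {\sf OPT}(1, t+1, i+1) \ge {\sf OPT}(1, t+1, i) = h_i(t). \label{eqmono}
\end{eqnarray}

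Intuitively (\ref{eqmono}) holds because enlarging the single-sink interval on the right only adds supply and hence cannot decrease the optimal evacuation time. To make this precise I would argue from (\ref{eq3.2}): for a fixed sink position $x$, the set of vertices over which $R_j(x)$ takes its maximum grows with $j$, so $R_{i+1}(x) \ge R_i(x)$ and therefore $\Theta_{t+1,i+1}(x) \ge \Theta_{t+1,i}(x)$ for every $x \in [v_{t+1}, v_i]$; for the remaining sink positions $x \in (v_i, v_{i+1}]$ one has $\Theta_{t+1,i+1}(x) \ge L_{t+1}(x) \ge L(t+1,i) \ge {\sf OPT}(1,t+1,i)$, using that $L_{t+1}$ is increasing and that $\Theta_{t+1,i}(v_i) = L(t+1,i)$. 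Minimizing over $x$ then yields (\ref{eqmono}). This step, rather than the combinatorics, is where the care lies, because the sink for the larger instance may lie in the newly added segment $(v_i, v_{i+1}]$, so a naive pointwise comparison of $\Theta$ over the common domain is not by itself sufficient.

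Given (\ref{eqmono}), I would finish by contradiction. Suppose $b := d_{p,i+1} < a := d_{p,i}$; since $b < a \le i-1$, both indices lie in $\{1, \ldots, i-1\}$ and all quantities below are defined. Uniqueness of the minimizers (Claim \ref{clm1}) gives the strict inequalities $f_{p,i}(b) > f_{p,i}(a)$ and $f_{p,i+1}(a) > f_{p,i+1}(b)$. In the first, since $g(b) \le g(a) \le f_{p,i}(a) < f_{p,i}(b)$, the maximum $f_{p,i}(b) = \max\{g(b), h_i(b)\}$ cannot be attained by $g(b)$, so $f_{p,i}(b) = h_i(b) > f_{p,i}(a) \ge g(a)$, i.e.\ $h_i(b) > g(a)$. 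Symmetrically, in the second inequality, since $h_{i+1}(a) \le h_{i+1}(b) \le f_{p,i+1}(b) < f_{p,i+1}(a)$, the maximum $f_{p,i+1}(a)$ must be attained by $g(a)$, giving $g(a) = f_{p,i+1}(a) > f_{p,i+1}(b) \ge h_{i+1}(b)$. Chaining these two facts, $h_i(b) > g(a) > h_{i+1}(b)$, which contradicts (\ref{eqmono}) applied at $t = b$. Hence $d_{p,i} \le d_{p,i+1}$, as claimed.
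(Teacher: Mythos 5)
Your proof is correct and is, at bottom, the same argument as the paper's: assume $d_{p,i+1} < d_{p,i}$, combine monotonicity of ${\sf OPT}$ under interval inclusion with the optimality of the two dividers, and derive a contradiction. Your six quantities $g(a)$, $h_i(a)$, $g(b)$, $h_{i+1}(b)$, $h_i(b)$, $h_{i+1}(a)$ are exactly the paper's $A, B, C, D, E, F$ in (\ref{eq:mm1.1}); your monotonicity inequality ${\sf OPT}(1,t+1,i+1) \ge {\sf OPT}(1,t+1,i)$ is the paper's Claim \ref{clm:mm1.1} specialized to single-sink intervals; and your final chain $h_i(b) > g(a) > h_{i+1}(b)$ is essentially the contradiction the paper reaches in its Case 3. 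Two aspects of your write-up are actually nicer than the paper's: you prove the monotonicity ingredient rather than merely asserting it (correctly flagging that the optimal sink of the enlarged instance may lie in the added segment $(v_i, v_{i+1}]$, so a pointwise comparison on the common domain is not enough), and your ``the maximum must be attained by the other term'' reasoning compresses the paper's three-case analysis into a direct derivation.

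The one place the paper is more careful than you is strictness. You need both $f_{p,i}(b) > f_{p,i}(a)$ and $f_{p,i+1}(a) > f_{p,i+1}(b)$, and you get them from uniqueness of the minimizers. The paper does assert uniqueness right after Claim \ref{clm1}, but uniqueness does not actually follow from unimodality: the maximum of a non-decreasing and a non-increasing sequence can have tied minimizers (e.g. $g = (1,3)$ and $h = (3,1)$ give $f = (3,3)$), and nothing rules out such ties among ${\sf OPT}$ values; indeed, without a tie-breaking convention the lemma itself can fail. The paper's proof needs strictness only once and gets it not from uniqueness but from the explicit convention that $d_{p,i+1}$ is taken as the largest minimizer --- that is what yields (\ref{eq:mm1.3}) --- while the optimality of $d_{p,i}$ enters only non-strictly via (\ref{eq:mm1.2}). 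Under that same convention your argument is easily repaired: keep $f_{p,i}(b) \ge f_{p,i}(a)$ non-strict; in the equality case, since $g(a) > h_{i+1}(b) \ge h_i(b)$ still holds, you get $g(b) = g(a)$, hence $f_{p,i+1}(b) = g(b) = g(a) = f_{p,i+1}(a)$, contradicting the strict inequality you do have. So the gap is minor and patchable, but as written your proof leans on a uniqueness statement that the paper asserts without justification and that the paper's own proof is structured to avoid.
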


\begin{lem}
For any integers $h, i, j$ and $l$ with $1 \le i \le j \le n$, $1 \le h \le l \le n$, $i \le h$ and $j \le l$,
${\bm x}^*(1, i, j) \le {\bm x}^*(1, h, l)$ holds.
\label{lem2}
\end{lem}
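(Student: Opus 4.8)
\noindent\emph{Proof plan.}
Write $x^* := {\bm x}^*(1,i,j)$ for the (scalar) optimal $1$-sink location of the first interval; the goal is to show ${\bm x}^*(1,h,l)\ge x^*$. My plan is to prove that $L_h(x_0)\le R_l(x_0)$ holds at \emph{every} point $x_0\in(v_h,x^*)$, for then Claim~\ref{clm2}(i), applied to the interval $[v_h,v_l]$, gives ${\bm x}^*(1,h,l)\ge x_0$; letting $x_0\uparrow x^*$ then yields ${\bm x}^*(1,h,l)\ge x^*$. Two boundary remarks make this legitimate: $x^*\le v_j\le v_l$ always holds, so $x^*$ never lies to the right of $[v_h,v_l]$; and if $x^*\le v_h$ the conclusion is immediate, since ${\bm x}^*(1,h,l)\in[v_h,v_l]$ already forces ${\bm x}^*(1,h,l)\ge v_h\ge x^*$. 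Hence I may assume $x^*\in(v_h,v_l]$, and then every $x_0\in(v_h,x^*)$ is an interior point of $[v_h,v_l]$ at which Claim~\ref{clm2} applies.

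The inequality $L_h(x_0)\le R_l(x_0)$ is assembled from two ingredients. First, a monotonicity-in-the-endpoints fact read directly off (\ref{eq3.1}) and (\ref{eq3.2}): since $i\le h$, both the feasible vertices and the prefix sums defining $L_h$ are contained in those defining $L_i$, whence $L_h(x)\le L_i(x)$ for all $x$; symmetrically, since $j\le l$, we get $R_j(x)\le R_l(x)$ for all $x$. Second, a separation at the first optimum: setting ${\sf OPT}(1,i,j)=\Theta_{i,j}(x^*)$, for any $x<x^*$ the \emph{strict} monotonicity of $L_i$ gives $L_i(x)<L_i(x^*)\le{\sf OPT}(1,i,j)$, while optimality of $x^*$ gives $\max\{L_i(x),R_j(x)\}=\Theta_{i,j}(x)\ge{\sf OPT}(1,i,j)$, forcing $R_j(x)\ge{\sf OPT}(1,i,j)$. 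Chaining the four inequalities at $x=x_0$ gives
\[
L_h(x_0)\le L_i(x_0)<{\sf OPT}(1,i,j)\le R_j(x_0)\le R_l(x_0),
\]
which is the desired $L_h(x_0)\le R_l(x_0)$, and the plan closes.

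The step I expect to be the real obstacle is the separation $L_i(x)<{\sf OPT}(1,i,j)\le R_j(x)$, and specifically making it robust to the fact that $L_i$ and $R_j$ need \emph{not} be continuous: because a sink placed at a vertex clears that vertex's supply instantly, both functions can jump upward at the vertices. This is exactly why I would avoid the tempting shortcut of declaring $L_i(x^*)=R_j(x^*)$ at the optimum and then applying Claim~\ref{clm2}(i) at the single point $x^*$: such an equality can fail precisely in the regime of Claim~\ref{clm2.1}(iii), where $x^*$ is a vertex with $L_i(x^*)\ge R_j(x^*)$, so the shortcut would produce the wrong inequality. Phrasing the argument through the strict inequality $R_j(x)\ge{\sf OPT}(1,i,j)>L_i(x)$ valid for all $x<x^*$ is what sidesteps the jumps, since it uses only that $L_i$ is strictly increasing and that $x^*$ minimizes $\Theta_{i,j}$, neither of which is affected by the discontinuities. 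The surrounding endpoint-monotonicity bookkeeping is then routine.
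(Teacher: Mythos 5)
Your proof is correct, but it takes a genuinely different route from the paper's. The paper argues by contradiction: assuming ${\bm x}^*(1,i,j) > {\bm x}^*(1,h,l)$, it evaluates the four functions $L_i, R_j, L_h, R_l$ at the two optima (eight quantities $A,\ldots,H$), derives chains such as $C \le E \le A$ and $B \le F \le D$ from endpoint/location monotonicity (its Claim~\ref{clm:mm2.1}), gets the strict inequalities $\max\{A,B\} < \max\{E,F\}$ and $\max\{C,D\} < \max\{G,H\}$ from uniqueness of the two minimizers, and then disposes of three cases separately. You instead give a direct argument: for every $x_0$ strictly left of $x^* = {\bm x}^*(1,i,j)$ you establish the chain $L_h(x_0) \le L_i(x_0) < {\sf OPT}(1,i,j) \le R_j(x_0) \le R_l(x_0)$, using only endpoint monotonicity of $L$ and $R$, strict increase of $L_i$, and optimality of $x^*$, then invoke Claim~\ref{clm2}(i) on $[v_h,v_l]$ and let $x_0 \uparrow x^*$; your boundary reductions ($x^* \le v_j \le v_l$, and the trivial case $x^* \le v_h$) are all in order, so the argument is complete. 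What your route buys: it is shorter, needs no case analysis, never evaluates anything at the second optimum ${\bm x}^*(1,h,l)$, and, as you observe, is robust to the upward jumps of $L_i$ and $R_j$ at vertices because it never asserts an equality $L_i(x^*) = R_j(x^*)$ at the optimum. What the paper's route buys: its proof of Lemma~\ref{lem2} is structurally identical to its proof of Lemma~\ref{lem1} (same notation, same monotonicity-plus-optimality inequalities, same three-case skeleton), so the two lemmas are dispatched by one template; note also that where you use strict monotonicity of $L_i$, the paper uses uniqueness of the minimizer of $\Theta_{i,j}$ --- two faces of the same underlying fact, so neither proof rests on weaker hypotheses than the other.
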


\subsubsection{Proof of Lemma \ref{lem1}:}
In order to prove Lemma \ref{lem1}, we first confirm a fundamental property.
\begin{clm}
For any integers $p$ with $1 \le p \le k$, and $h, i, j$ and $l$ with $1 \le h \le i \le j \le l \le n$,
${\sf OPT}(p, i, j) \le {\sf OPT}(p, h, l)$ holds.
\label{clm:mm1.1}
\end{clm}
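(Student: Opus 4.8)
\subsubsection{Proof plan for Claim \ref{clm:mm1.1}:}
The plan is to prove the inequality by induction on $p$, reducing the two-sided containment to two one-sided steps. Since $[v_i,v_j]\subseteq[v_i,v_l]\subseteq[v_h,v_l]$ for $h\le i\le j\le l$, it suffices to establish (a) the right-shrinking bound ${\sf OPT}(p,i,j)\le{\sf OPT}(p,i,l)$ for $j\le l$, and (b) the left-shrinking bound, which follows from (a) by reversing the path, and then to chain them: ${\sf OPT}(p,i,j)\le{\sf OPT}(p,i,l)\le{\sf OPT}(p,h,l)$. I would also record beforehand the monotonicity in the number of sinks, ${\sf OPT}(p,i,j)\le{\sf OPT}(p-1,i,j)$. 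To see this, when $p\le j-i+1$ an optimal $(p-1)$-sink solution on $[v_i,v_j]$ must contain a block $[v_a,v_b]$ of at least two vertices; splitting it into $[v_a,v_{b-1}]$ and $\{v_b\}$ and re-optimising each part replaces the old block cost (which is at least ${\sf OPT}(1,a,b)$) by $\max\{{\sf OPT}(1,a,b-1),0\}$, and the right-shrinking base case below gives ${\sf OPT}(1,a,b-1)\le{\sf OPT}(1,a,b)$, so the overall maximum does not grow.

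First I would settle the base case $p=1$. The engine is the observation, read off directly from (\ref{eq3.1})--(\ref{eq3.2}), that $L_i(x)\le L_h(x)$ whenever $h\le i$ and $R_j(x)\le R_l(x)$ whenever $j\le l$: enlarging the index range only appends nonnegative weight terms to each summand and enlarges the domain of the maximum. Hence $\Theta_{i,j}(x)\le\Theta_{h,l}(x)$ on the common domain $[v_i,v_j]$. Writing $x^\ast={\bm x}^\ast(1,h,l)$, I then split on the position of $x^\ast$. If $x^\ast\in[v_i,v_j]$, then ${\sf OPT}(1,i,j)\le\Theta_{i,j}(x^\ast)\le\Theta_{h,l}(x^\ast)={\sf OPT}(1,h,l)$. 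If $x^\ast<v_i$, I place the small-interval sink at $v_i$ and bound ${\sf OPT}(1,i,j)\le R_j(v_i)\le R_l(v_i)\le R_l(x^\ast)\le{\sf OPT}(1,h,l)$, using that $R_l$ is decreasing; the case $x^\ast>v_j$ is symmetric with $L$. Note that the naive route of minimising $\Theta_{h,l}$ over the subinterval $[v_i,v_j]$ fails, since restricting the domain can only raise a minimum, which is why this explicit placement is needed.

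For the inductive step I would apply the recursion (\ref{eq0.0}) to the larger interval, writing ${\sf OPT}(p,i,l)=\max\{{\sf OPT}(p-1,i,t^\ast),{\sf OPT}(1,t^\ast+1,l)\}$ for its optimal split $t^\ast$, and compare against $[v_i,v_j]$. If the split is reachable, $t^\ast\le j-1$, I reuse $t^\ast$ as a legal split for $[v_i,v_j]$ and bound both terms: the left term ${\sf OPT}(p-1,i,t^\ast)$ is common, while the induction hypothesis gives ${\sf OPT}(1,t^\ast+1,j)\le{\sf OPT}(1,t^\ast+1,l)$. The genuinely awkward case, and the main obstacle, is $t^\ast\ge j$, where the optimal split of the larger interval lies to the right of $v_j$ and so is not a legal split for $[v_i,v_j]$. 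Here $[v_i,v_j]\subseteq[v_i,v_{t^\ast}]$, and I would chain ${\sf OPT}(p,i,j)\le{\sf OPT}(p-1,i,j)\le{\sf OPT}(p-1,i,t^\ast)\le{\sf OPT}(p,i,l)$, invoking the sink-count monotonicity for the first inequality, the induction hypothesis (right-shrinking from $t^\ast$ to $j$) for the second, and the fact that ${\sf OPT}(p-1,i,t^\ast)$ is one of the two quantities maximised in ${\sf OPT}(p,i,l)$ for the third. Combining the right- and left-shrinking steps then yields the claim for all $p$.
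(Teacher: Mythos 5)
Your proof is correct, but note that the paper itself offers \emph{no} proof of Claim~\ref{clm:mm1.1}: it is stated as a ``fundamental property'' to be confirmed before the proof of Lemma~\ref{lem1}, the authors evidently regarding it as self-evident monotonicity (enlarging $[v_i,v_j]$ to $[v_h,v_l]$ only adds supply and lengthens distances). So your argument supplies a missing piece rather than paralleling an existing one. Your route --- induction on $p$ driven by the recursion (\ref{eq0.0}), with the base case $p=1$ read off from (\ref{eq3.1})--(\ref{eq3.2}) (this is precisely the monotonicity the paper later isolates as Claim~\ref{clm:mm2.1} for the proof of Lemma~\ref{lem2}), plus the auxiliary sink-count monotonicity ${\sf OPT}(p,i,j)\le{\sf OPT}(p-1,i,j)$ --- is sound, and the two delicate points are handled correctly: the explicit sink placement at $v_i$ when ${\bm x}^\ast(1,h,l)$ falls outside $[v_i,v_j]$ (where, as you say, naively restricting the minimization domain would go the wrong way), and the chain ${\sf OPT}(p,i,j)\le{\sf OPT}(p-1,i,j)\le{\sf OPT}(p-1,i,t^\ast)\le{\sf OPT}(p,i,l)$ for the ``unreachable'' split $t^\ast\ge j$. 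A slightly more direct alternative with the same ingredients, closer to the intuition the paper presumably had in mind, is a clipping argument: take the optimal $p$-sink solution for $[v_h,v_l]$, intersect each of its blocks with $[v_i,v_j]$, clip each sink into its intersected block, and apply your $p=1$ comparison blockwise; this avoids the induction entirely, whereas your induction buys the reusable side fact that adding a sink never increases the optimal cost. Either way the claim is established, and a write-up along your lines would make the paper's argument self-contained where it currently is not. (One cosmetic slip: in the case $t^\ast\le j-1$ the bound ${\sf OPT}(1,t^\ast+1,j)\le{\sf OPT}(1,t^\ast+1,l)$ is an appeal to the $p=1$ base case, not to the induction hypothesis for $p-1$; under strong induction on $p$ this is immaterial.)
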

We prove Lemma \ref{lem1} by contradiction: 
there exist integers $p$ and $i$ with $2 \le p \le k$ and $2 \le i \le n-1$ such that
$d_{p, i} > d_{p, i+1}$ holds.
For ease of notation in the proof, we use the notations $A, B, C, D, E$ and $F$ as follows:
\begin{eqnarray}
\begin{array}{ll}
A = {\sf OPT}(p-1, 1, d_{p, i}),		&	B = {\sf OPT}(1, d_{p, i}+1, i), \\
C = {\sf OPT}(p-1, 1, d_{p, i+1}),	&	D = {\sf OPT}(1, d_{p, i+1}+1, i+1), \\
E = {\sf OPT}(1, d_{p, i+1}+1, i),	&	F = {\sf OPT}(1, d_{p, i}+1, i+1).
\end{array}
\label{eq:mm1.1}
\end{eqnarray}
From the assumption of $d_{p, i} > d_{p, i+1}$ and Claim \ref{clm:mm1.1}, we can derive the following inequalities:
\begin{eqnarray}
C &\le& A, \label{eq:mm1.5} \\
B &\le& E \le D, \label{eq:mm1.6} \\
B &\le& F \le D. \label{eq:mm1.7}
\end{eqnarray}
Since $d_{p, i}$ minimizes $f_{p, i}(t) = \max \{ {\sf OPT}(p-1, 1, t), {\sf OPT}(1, t+1, i) \}$ (refer to (\ref{eq1}) and (\ref{eq1.1})),
we have the following inequality:
\begin{eqnarray}
\max \{A, B \} \le \max \{ C, E \}.
\label{eq:mm1.2}
\end{eqnarray}
Also, without loss of generality, we assume that $d_{p, i+1}$ is maximized unless the cost increases. 
By this assumption, we have the following inequality:
\begin{eqnarray}
\max \{C, D \} < \max \{ A, F \}.
\label{eq:mm1.3}
\end{eqnarray}
Then, we consider three cases:
[Case 1] $A \le B$; [Case 2] $D \le C$; [Case 3] $B < A$ and $C < D$. \\
\noindent
[Case 1]: By (\ref{eq:mm1.5}), (\ref{eq:mm1.7}) and the condition of $A \le B$, we have $C \le A \le F \le D$, which contradicts (\ref{eq:mm1.3}).\\
\noindent
[Case 2]: By (\ref{eq:mm1.5}), (\ref{eq:mm1.6}) and the condition of $D \le C$, we have $B \le E \le C \le A$.
By this and (\ref{eq:mm1.2}), we have $A \le C$.
Also, by (\ref{eq:mm1.5}), (\ref{eq:mm1.7}) and the condition of $D \le C$, we have $F \le D \le C \le A$.
By this and (\ref{eq:mm1.3}), we have $C < A$, which contradicts $A \le C$.\\
\noindent
[Case 3]: By (\ref{eq:mm1.2}) and the condition of $B < A$, we have 
\begin{eqnarray}
A \le \max \{ C, E \}. \label{eq:mm1.8}
\end{eqnarray}
Also, by (\ref{eq:mm1.3}) and the condition of $C < D$, we have
\begin{eqnarray}
D < \max \{ A, F \}. \label{eq:mm1.9}
\end{eqnarray}
If $F \le A$ holds, we have $D < \max \{ C, E \}$ by (\ref{eq:mm1.8}) and (\ref{eq:mm1.9}), which contradicts the condition of $C < D$ or (\ref{eq:mm1.6}).
If $A < F$ holds, we have $D < F$ by (\ref{eq:mm1.9}), which contradicts (\ref{eq:mm1.7}).
\qed

\subsubsection{Proof of Lemma \ref{lem2}:}
In order to prove Lemma \ref{lem2}, we first confirm the following claim (refer to the definitions of (\ref{eq3.1}) and (\ref{eq3.2})).
\begin{clm}
{\rm (i)} For any integers $i$ and $j$ with $1 \le j \le i \le n$ and any points $x$ and $y$ with $v_i \le x \le y \le v_n$,
$L_i(x) \le L_j(x)$ and $L_i(x) \le L_i(y)$ hold.\\
{\rm (ii)} For any integers $i$ and $j$ with $1 \le i \le j \le n$ and any points $x$ and $y$ with $v_1 \le y \le x \le v_i$,
$R_i(x) \le R_j(x)$ and $R_i(x) \le R_i(y)$ hold.
\label{clm:mm2.1}
\end{clm}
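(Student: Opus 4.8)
The plan is to prove all four inequalities directly from the max-expressions (\ref{eq3.1}) and (\ref{eq3.2}), in each case exhibiting a term-by-term domination between the two maxima. The guiding observation is that each of $L_i(x)$ and $R_j(x)$ is a maximum of summands indexed by the vertices lying in a half-open interval, and every change permitted by the hypotheses affects this maximum monotonically in one of two transparent ways: it either enlarges the index set over which the maximum is taken, or it increases every individual summand (or both). Since the weights $w_h$ and the constant $\tau$ are positive in the model, both mechanisms are immediate, and no finer analysis of the piecewise-linear structure is needed.

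First I would treat part (i). For the inequality $L_i(x) \le L_j(x)$ with $j \le i$, observe that the index set $[v_i, x)$ defining $L_i(x)$ is contained in the index set $[v_j, x)$ defining $L_j(x)$, and that for any common index $l$ the $L_j$-summand dominates the $L_i$-summand because $\sum_{j \le h \le l} w_h \ge \sum_{i \le h \le l} w_h$ (the additional terms $w_j, \dots, w_{i-1}$ are positive while the distance part $\tau(x - v_l)$ is unchanged). Hence every summand appearing in the $L_i(x)$-maximum is matched by a no-smaller summand in the $L_j(x)$-maximum, which gives $L_i(x) \le L_j(x)$. For the inequality $L_i(x) \le L_i(y)$ with $x \le y$, the index set $[v_i, x)$ is contained in $[v_i, y)$, and for each common index $l$ the summand can only increase because $\tau(y - v_l) \ge \tau(x - v_l)$; this is precisely the strict monotonicity of $L_i$ already noted immediately after (\ref{eq3.2}).

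Part (ii) is the mirror image of part (i) under reflection of the path, so I would prove it by the identical template applied to (\ref{eq3.2}). Enlarging the right endpoint from $i$ to $j$ (with $i \le j$) enlarges the index set $(x, v_i] \subseteq (x, v_j]$ and, term by term, enlarges the partial sums via $\sum_{l \le h \le j} w_h \ge \sum_{l \le h \le i} w_h$, yielding $R_i(x) \le R_j(x)$; and moving the sink to the smaller point $y \le x$ enlarges the index set $(x, v_i] \subseteq (y, v_i]$ and increases each distance term through $\tau(v_l - y) \ge \tau(v_l - x)$, yielding $R_i(x) \le R_i(y)$.

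I do not expect any genuine obstacle: the whole argument is a bookkeeping comparison of two maxima, reducing to the positivity of the weights and the monotonicity of the distance term. The only point requiring a little care is the degenerate case in which an index set is empty (for instance $x = v_i$), which is covered by the conventions $L_i(v_i) = 0$ and $R_j(v_j) = 0$ and renders the corresponding inequality trivial, since the dominating side is a maximum of nonnegative quantities. \qed
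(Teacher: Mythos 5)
Your proof is correct, and it follows exactly the route the paper intends: the paper states this claim without proof, asserting it as immediate from the definitions (\ref{eq3.1}) and (\ref{eq3.2}), and your term-by-term domination argument (larger index set, larger summands, with the empty-set convention $L_i(v_i)=R_j(v_j)=0$ handling the degenerate case) is precisely the verification being left to the reader.
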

We prove Lemma \ref{lem2} by contradiction: 
there exist integers $h, i, j$ and $l$ with $1 \le i \le j \le n$, $1 \le h \le l \le n$, $i \le h$ and $j \le l$ such that
${\bm x}^*(1, i, j) > {\bm x}^*(1, h, l)$ holds.
By this assumption, we have the following inequality:
\begin{eqnarray}
i \le h \le {\bm x}^*(1, h, l) < {\bm x}^*(1, i, j) \le j \le l.
\label{eq:mm2.4}
\end{eqnarray}
For ease of notation in the proof, we use the notations $A, B, C, D, E, F, G$ and $H$ as follows:
\begin{eqnarray}
\begin{array}{ll}
A = L_i({\bm x}^*(1, i, j)),		&	B = R_j({\bm x}^*(1, i, j)), \\
C = L_h({\bm x}^*(1, h, l)),		&	D = R_l({\bm x}^*(1, h, l)), \\
E = L_i({\bm x}^*(1, h, l)),		&	F = R_j({\bm x}^*(1, h, l)), \\
G = L_h({\bm x}^*(1, i, j)),		&	H = R_l({\bm x}^*(1, i, j)).
\end{array}
\label{eq:mm2.1}
\end{eqnarray}
From (\ref{eq:mm2.4}) and Claim \ref{clm:mm2.1}, we can derive the following inequalities:
\begin{eqnarray}
C &\le& E \le A, \label{eq:mm2.5} \\
C &\le& G \le A, \label{eq:mm2.6} \\
B &\le& F \le D, \label{eq:mm2.7} \\
B &\le& H \le D. \label{eq:mm2.8}
\end{eqnarray}
Since ${\bm x}^*(1, i, j)$ and ${\bm x}^*(1, h, l)$ are the unique points which minimize $\Theta_{i, j}(x) = \max \{ L_i(x),$ $R_j(x)\}$ and $\Theta_{h, l}(x) = \max \{ L_h(x), R_l(x)\}$, respectively
(refer to (\ref{eq4})), we have the following inequalities:
\begin{eqnarray}
\max \{A, B \} < \max \{ E, F \}, \label{eq:mm2.2} \\
\max \{C, D \} < \max \{ G, H \}. \label{eq:mm2.3}
\end{eqnarray}
Then, we consider three cases:
[Case 1] $A \le B$; [Case 2] $D \le C$; [Case 3] $B < A$ and $C < D$. \\
\noindent
[Case 1]: By (\ref{eq:mm2.6}), (\ref{eq:mm2.8}) and the condition of $A \le B$, we have $C \le G \le H \le D$, which contradicts (\ref{eq:mm2.3}).\\
\noindent
[Case 2]: By (\ref{eq:mm2.5}), (\ref{eq:mm2.7}) and the condition of $D \le C$, we have $B \le F \le E \le A$, which contradicts (\ref{eq:mm2.2}).\\
\noindent
[Case 3]: By (\ref{eq:mm2.2}) and the condition of $B < A$, we have 
\begin{eqnarray}
A < \max \{ E, F \}. \label{eq:mm2.9}
\end{eqnarray}
Also, by (\ref{eq:mm2.3}) and the condition of $C < D$, we have
\begin{eqnarray}
D < \max \{ G, H \}. \label{eq:mm2.10}
\end{eqnarray}
If $F \le E$ holds, we have $A < E$ by (\ref{eq:mm2.9}), which contradicts (\ref{eq:mm2.5}).
Also, if $G \le H$ holds, we have $D < H$ by (\ref{eq:mm2.10}), which contradicts (\ref{eq:mm2.8}).
If $E < F$ and $H < G$ hold, we have $A < F \le D < G$ by (\ref{eq:mm2.7}), (\ref{eq:mm2.9}) and (\ref{eq:mm2.10}), 
that is, $A < G$ holds, which contradicts (\ref{eq:mm2.6}).
\qed


\subsection{Algorithm based on dynamic programming}
\label{sec:mma}
The algorithm basically computes ${\sf OPT}(1, 1, 1)$, $\ldots$, ${\sf OPT}(1, 1, n)$, ${\sf OPT}(2, 1, 1)$, $\ldots$, ${\sf OPT}(2, 1, n)$, $\ldots$, ${\sf OPT}(k, 1, 1)$, $\ldots$, ${\sf OPT}(k, 1, n)$ in this order.
For some integers $p$ and $i$ with $2 \le p \le k$ and $2 \le i \le n$,
let us consider how to obtain ${\sf OPT}(p, 1, i)$.
Actually, in order to obtain ${\sf OPT}(p, 1, i)$, the algorithm needs ${\sf OPT}(p-1, 1, l)$ for $l = 1, 2, \ldots, n$ and ${\sf OPT}(p, 1, i-1)$, which are supposed to have been obtained.
By (\ref{eq0.0}), (\ref{eq1}) and (\ref{eq1.1}), we have
\begin{eqnarray}
{\sf OPT}(p, 1, i) = f_{p, i}(d_{p, i}) = \max \{ {\sf OPT}(p-1, 1, d_{p, i}), {\sf OPT}(1, d_{p, i}+1, i) \}. \label{eq8}
\end{eqnarray}
Here, we assumed that ${\sf OPT}(p-1, 1, d_{p, i})$ has already been obtained.
Thus, in order to obtain ${\sf OPT}(p, 1, i)$, we only need to compute ${\sf OPT}(1, d_{p, i} + 1, i)$.
Recall that $d_{p, i}$ is the unique point which minimizes function $f_{p, i}(t)$ (refer to (\ref{eq1}) and (\ref{eq1.1})).
Now, the algorithm knows where $d_{p, i-1}$ exists, and by Lemma \ref{lem1}, $d_{p, i-1} \le d_{p, i}$ holds.
So the algorithm starts to compute $f_{p, i}(t)$ for $t=d_{p, i-1}$,
and continues to compute in ascending order of $t$, as will be shown below.
Note that function $f_{p, i}(t)$ is unimodal in $t$ by Claim \ref{clm1}, 
which implies that $f_{p, i}(t)$ is strictly decreasing until $t=d_{p, i}$.
Thus, if the algorithm reaches the first integer $t^* \ge d_{p, i-1}$ such that $f_{p, i}(t^*) \le f_{p, i}(t^*+1)$,
it outputs $t^*$ as $d_{p, i}$.
Then, the algorithm also outputs $f_{p, i}(t^*)$ as ${\sf OPT}(p, 1, i)$.

\subsubsection{Computation of $f_{p, i}(t)$ for $t = d_{p, i-1}$:}
As above mentioned, the algorithm first computes $f_{p, i}(t)$ with $t = d_{p, i-1}$ which is defined as follows:
\begin{eqnarray}
f_{p, i}(d_{p, i-1}) = 	\max \{ {\sf OPT}(p-1, 1, d_{p, i-1}), {\sf OPT}(1, d_{p, i-1}+1, i) \}. \label{eq11}
\end{eqnarray}
Since the algorithm has already obtained ${\sf OPT}(p-1, 1, d_{p, i-1})$,
we only need to compute ${\sf OPT}(1, d_{p, i-1}+1, i)$.
To do this, we actually need to find ${\bm x}^*(1, d_{p, i-1}+1, i)$.
On the other hand, the algorithm has already obtained ${\sf OPT}(p, 1, i-1)$ as follows:
\begin{eqnarray}
{\sf OPT}(p, 1, i-1) = 	\max \{ {\sf OPT}(p-1, 1, d_{p, i-1}), {\sf OPT}(1, d_{p, i-1}+1, i-1) \}, \label{eq12}
\end{eqnarray}
which implies that ${\bm x}^*(1, d_{p, i-1}+1, i-1)$ has been obtained.
By Lemma \ref{lem2}, ${\bm x}^*(1, d_{p, i-1}+1, i-1) \le {\bm x}^*(1, d_{p, i-1}+1, i)$ holds.
Let $l$ and $l'$ be the indices of vertices  
such that ${\bm x}^*(1, d_{p, i-1}+1, i-1) \in [v_{l}, v_{l+1}]$ with $d_{p, i-1}+1 \le l \le i-2$
and ${\bm x}^*(1, d_{p, i-1}+1, i) \in [v_{l'}, v_{l'+1}]$ with $d_{p, i-1}+1 \le l' \le i-1$, respectively (see Figure \ref{fig1}).
\begin{figure}[h]
\centering
\includegraphics[width=70mm,clip]{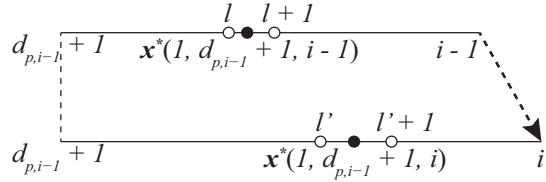} 
\caption{Illustrations of ${\bm x}^*(1, d_{p, i-1}+1, i-1)$ and ${\bm x}^*(1, d_{p, i-1}+1, i)$}
\label{fig1}
\end{figure}
By Claim \ref{clm2}, for any interval $[v_h, v_{h+1}]$ with $d_{p, i-1}+1 \le h \le i-1$, there exists ${\bm x}^*(1, d_{p, i-1}+1, i)$ in $[v_h, v_{h+1}]$
if $L(d_{p, i-1}+1, h) \ge R(h, i)$ and $L(d_{p, i-1}+1, h+1) \le R(h+1, i)$ hold.
Therefore, if we maintain the data structure so that we can compute these values,
the algorithm can test if there exists ${\bm x}^*(1, d_{p, i-1}+1, i) \in [v_h, v_{h+1}]$ or not
(what the data structure is or how we can maintain and use it will be explained in the next subsection).
Then, the algorithm starts to test for $h=l$,
and continues to test in ascending order of $h$.
If an interval where ${\bm x}^*(1, d_{p, i-1}+1, i)$ exists, that is, $[v_{l'}, v_{l'+1}]$ is found,
then ${\bm x}^*(1, d_{p, i-1}+1, i)$ and ${\sf OPT}(1, d_{p, i-1}+1, i)$ can be computed in $O(1)$ time by Claim \ref{clm2.1}.

\subsubsection{Computation of $f_{p, i}(t)$ for $t \ge d_{p, i-1}+1$:}
Now, suppose that for an integer $t$ with $t \ge d_{p, i-1}$,
the algorithm has already obtained $f_{p, i}(t)$, that is, ${\bm x}^*(1, t+1, i)$ and ${\sf OPT}(1, t+1, i)$.
For an integer $t$ with $t \ge d_{p, i-1}$, let $l(t+1)$ be the index of a vertex with $t+1 \le l(t+1) \le i-1$
such that ${\bm x}^*(1, t+1, i) \in [v_{l(t+1)}, v_{l(t+1)+1}]$.
Note that $l(t+1)$ has also been obtained (see Figure \ref{fig2}).
Then, the computation of $f_{p, i}(t+1)$ comes down to finding $l(t+2)$ which is greater than or equal to $l(t+1)$,
and so, it can be treated in the similar manner as the computation of $f_{p, i}(d_{p, i-1})$.
\begin{figure}[h]
\centering
\includegraphics[width=70mm,clip]{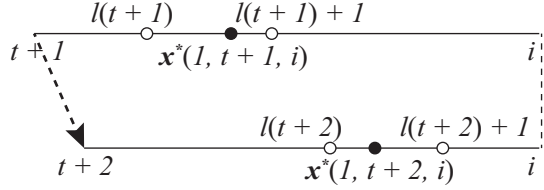} 
\caption{Illustrations of ${\bm x}^*(1, t+1, i)$ and ${\bm x}^*(1, t+2, i)$}
\label{fig2}
\end{figure}

\subsection{How to compute $L(\alpha, \beta)$ and $R(\beta, \gamma)$}
\label{s2}
As mentioned in Section \ref{sec:mma}, in order to obtain ${\sf OPT}(p, 1, i)$ for fixed $p$ and all $i=p+1, p+2, \ldots, n$
(note that ${\sf OPT}(p, 1, i)$ $ = 0$ for $i=1, 2, \ldots, p$), 
the algorithm computes $f_{p, p+1}(d_{p, p}),\ldots,f_{p, p+1}(d_{p, p+1})$,
$f_{p, p+2}(d_{p, p+1}),\ldots,f_{p, p+2}(d_{p, p+2}),\ldots,$
$f_{p, n}(d_{p, n-1}),\ldots,f_{p, n}(d_{p, n})$.
In this computation,
the algorithm actually computes $L(p, p),L(p, p+1),\ldots,L(p, l(p)),L(p+1, l(p)),L(p+1, l(p)+1),\ldots, L(p+1, l(p+1)),\ldots$
where $l(i)$ is the index of vertex with $p \le l(i) \le l(i+1) \le n$ for any $i \ge p$,
and also, $R(p, p),R(p, p+1),\ldots,R(p, r(p)),R(p+1, r(p)),R(p+1, r(p)+1),\ldots, R(p+1, r(p+1)),\ldots$
where $r(i)$ is the index of vertex with $p \le r(i) \le r(i+1) \le n$ for any $i \ge p$.
In order to compute $L(\alpha, \beta)$ and $R(\beta, \gamma)$ for any integers $\alpha, \beta$ and $\gamma$ with $1 \le \alpha \le \beta \le \gamma \le n$, 
the algorithm maintains the specific data structures $D_L(\alpha, \beta)$ and $D_R(\beta, \gamma)$, respectively.
Depending on the situation, the algorithm updates $D_L(\alpha, \beta)$ to $D_L(\alpha+1, \beta)$ or $D_L(\alpha, \beta+1)$ and $D_R(\beta, \gamma)$ to $D_R(\beta+1, \gamma)$ or $D_R(\beta, \gamma+1)$.
We below show definitions of the two data structures and how to maintain these.

\subsubsection{Definition of $D_L(\alpha, \beta)$ and how to maintain $D_L(\alpha, \beta)$:} 
In this discussion, we assume that $\alpha < \beta$ holds ($D_L(\alpha, \beta) = \emptyset$ when $\alpha = \beta$).
Let us consider the evacuation of all supplies on $[v_\alpha, v_\beta]$ to $v_\beta$.
We define the vertex indices $\rho_1, \ldots, \rho_e$ as 
\begin{eqnarray}
\begin{array}{ll}
	\rho_1 = \argmax \left\{ \tau(v_\beta - v_j) + \frac{\sum_{l=\alpha}^j w_l}{c} \ \bigg| \ \alpha \le j < \beta \right\}		 			& \mbox{and} \\
	\rho_i = \argmax \left\{ \tau(v_\beta - v_j) + \frac{\sum_{l=\rho_{i-1}+1}^j w_l}{c} \ \bigg| \ \rho_{i-1} < j < \beta \right\}	& \mbox{for} \ 2 \le i \le e.
    \end{array}
 \label{eq13}
\end{eqnarray}
Note that $\rho_e = \beta-1$ holds. 
For every integer $i$ with $1 \le i \le e$, we also define the value of $\rho_i$ as $\sigma_i = \sum \{ w_h \mid \rho_{i-1}+1 \le h \le \rho_i \}$ where $\rho_0+1 = \alpha$.
Here, we notice that for every integer $i$ with $2 \le i \le e$, the first unit of $v_{\rho_{i-1}}$ never be induced to stop at any vertex $v_j$ with $\rho_{i-1} < j < \beta$.
The data structure $D_L(\alpha, \beta)$ consists of the two sequences $(\rho_1, \ldots, \rho_e)$ and $(\sigma_1, \ldots, \sigma_e)$.
Note that we define the size of $D_L(\alpha, \beta)$ as $|D_L(\alpha, \beta)| = e$.
Recall that in continuous model, the cost is defined on each infinitesimal unit of supply,
i.e., the cost of $x$ for a unit is defined as the minimum time required to send the unit to $x$.
Here, we notice that for any integer $i$ with $2 \le i \le e$, the first unit of $v_{\rho_{i-1}}$ never be induced to stop at $v_{\rho_i}$.
Then, $L(\alpha, \beta)$ can be computed as 
\begin{eqnarray}
L(\alpha, \beta) =  \tau(v_\beta - v_{\rho_1}) + \frac{\sigma_1}{c}.
\label{eq14}
\end{eqnarray}

In order to update $D_L(\alpha, \beta)$ to $D_L(\alpha+1, \beta)$, the algorithm tests if $\rho_1 = \alpha$ holds or not. 
If it holds, the algorithm sets $D_L(\alpha+1, \beta)$ so that
\begin{eqnarray}
    \begin{array}{lllll}
	\rho_i \leftarrow \rho_{i+1} \	& \mbox{and} \ & \sigma_i \leftarrow \sigma_{i+1} \				& \mbox{for} \ & 1 \le i \le e-1.
    \end{array}
    \label{eq15}
\end{eqnarray}
Otherwise, the algorithm sets $D_L(\alpha+1, \beta)$ so that
\begin{eqnarray}
    \begin{array}{lllll}
    	\rho_1 \leftarrow \rho_1 \	& \mbox{and} \ & \sigma_1 \leftarrow \sigma_1 - w_\alpha, \		& \\
	\rho_i \leftarrow \rho_i \	& \mbox{and} \ & \sigma_i \leftarrow \sigma_i \				& \mbox{for} \ & 2 \le i \le e.
    \end{array}
    \label{eq16}
\end{eqnarray}

On the other hand, in order to update $D_L(\alpha, \beta)$ to $D_L(\alpha, \beta+1)$, the algorithm first sets $\rho_{e+1} = \beta$ and $\sigma_{e+1} = w_\beta$.
Then, the algorithm repeatedly tests if $\tau(v_{\rho_{j+1}}-v_{\rho_j}) \le \sigma_{j+1}/c$ holds or not in descending order of $j$ from $j = e$.
If it holds, the algorithm sets $D_L(\alpha, \beta+1)$ so that
\begin{eqnarray}
    \begin{array}{lllll}
	\rho_i \leftarrow \rho_i \		& \mbox{and} \ & \sigma_i \leftarrow \sigma_i \					& \mbox{for} \ & 1 \le i \le j-1, \\
	\rho_j \leftarrow \rho_{j+1} \	& \mbox{and} \ & \sigma_j \leftarrow \sigma_j + \sigma_{j+1}, \	& 	
    \end{array}
    \label{eq17}
\end{eqnarray}
until $\tau(v_{\rho_{e'+1}}-v_{\rho_{e'}}) > \sigma_{e'+1}/c$ holds for $j = e'$ with some integer $e' \le e$.
Let $t(\alpha, \beta)$ denote the number of such tests required to update $D_L(\alpha, \beta)$ to $D_L(\alpha, \beta+1)$, which can be represent as
\begin{eqnarray}
t(\alpha, \beta) = e - e' + 1 = |D_L(\alpha, \beta)| - |D_L(\alpha, \beta+1)| + 2.
\label{eq18}
\end{eqnarray}

Recall that in the computation to obtain ${\sf OPT}(p, 1, i)$ for fixed $p$ and all $i=p+1, p+2, \ldots, n$
for a given integer $\alpha$ with $p \le \alpha \le n-1$, the algorithm updates $D_L(\alpha, l(\alpha-1))$ to $D_L(\alpha, l(\alpha))$ where $l(p-1)=p$ and $l(n) = n$.
Let $T(\alpha)$ denote the total number of such tests required to update $D_L(\alpha, l(\alpha-1))$ to $D_L(\alpha, l(\alpha))$,
and $T$ denote the sum of $T(\alpha)$ for $p \le \alpha \le n-1$.
By (\ref{eq15}) and (\ref{eq16}), we have $|D_L(\alpha, l(\alpha))| \ge |D_L(\alpha+1, l(\alpha))|$, so the upper bound of $T$ can be obtained as
\begin{eqnarray}
T 	&=&		\sum_{\alpha=p}^{n-1} T(\alpha)	=	\sum_{\alpha=p}^{n-1} \sum_{\beta=l(\alpha-1)}^{l(\alpha)} t(\alpha, \beta) \nonumber \\
 	&=&		\sum_{\alpha=p}^{n-1} \big\{ |D_L(\alpha, l(\alpha-1))| - |D_L(\alpha, l(\alpha))| + 2l(\alpha) - 2l(\alpha-1) \big\} \nonumber \\
	&\le&	\sum_{\alpha=p}^{n-1} \big\{ |D_L(\alpha, l(\alpha-1))| - |D_L(\alpha+1, l(\alpha))| + 2l(\alpha) - 2l(\alpha-1) \big\} \nonumber \\
	&=&		|D_L(p, l(p-1))| + 2l(n-1) - 2l(p-1) \in O(n-p),
\label{eq19}
\end{eqnarray}
which implies that $t(\alpha, \beta)$ is amortized $O(1)$.

\subsubsection{Definition of $D_R(\beta, \gamma)$ and how to maintain $D_R(\beta, \gamma)$:} 
In this discussion, we assume that $\beta < \gamma$ holds ($D_R(\beta, \gamma) = \emptyset$ when $\beta = \gamma$).
Let us consider the evacuation of all supplies on $[v_\beta, v_\gamma]$ to $v_\beta$.
We define the vertex indices $\mu_1, \ldots, \mu_f$ as
\begin{eqnarray}
\begin{array}{ll}
	\mu_1 = \argmax \left\{ \tau(v_j - v_\beta) + \frac{\sum_{l=j}^\gamma w_l}{c} \ \bigg| \ \beta < j \le \gamma \right\}			 	& \mbox{and} \\
	\mu_i = \argmax \left\{ \tau(v_j - \mu_{i-1}) + \frac{\sum_{l=j}^\gamma w_l}{c} \ \bigg| \ \mu_{i-1} < j \le \gamma \right\}	& \mbox{for} \ 2 \le i \le f.
    \end{array}
 \label{eq20}
\end{eqnarray}
Note that $\mu_f = \gamma$ holds. 
For every integer $i$ with $1 \le i \le f$, we also define the value of $\mu_i$ as $W_i = \sum \{ w_h \mid \mu_i \le h \le n\}$.
Here, we notice that $v_{\mu_i}$ is the rightmost vertex of which the first unit never be induced to stop at any vertex $v_j$ with $\mu_{i-1} < j < \mu_i$ where $\mu_0 = \beta$.
In addition, let $os(\gamma) = \sum \{ w_h \mid \gamma+1 \le h \le n\}$.
The data structure $D_R(\beta, \gamma)$ consists of the offset value $os(\gamma)$ and the two sequences $(\mu_1, \ldots, \mu_f)$ and $(W_1, \ldots, W_f)$.
Note that we define the size of $D_R(\beta, \gamma)$ as $|D_R(\beta, \gamma)| = f$.
Then, $R(\beta, \gamma)$ can be computed as 
\begin{eqnarray}
R(\beta, \gamma) =  \tau(v_{\mu_1} - v_\beta) + \frac{W_1 - os(\gamma)}{c}.
\label{eq21}
\end{eqnarray}

In order to update $D_R(\beta, \gamma)$ to $D_R(\beta+1, \gamma)$, the algorithm tests if $\mu_1 = \beta+1$ holds or not. 
If it holds, the algorithm sets $D_R(\beta+1, \gamma)$ so that
\begin{eqnarray}
    \begin{array}{lllll}
	\mu_i \leftarrow \mu_{i+1} \	& \mbox{and} \ & W_i \leftarrow W_{i+1} \				& \mbox{for} \ & 1 \le i \le f-1.
    \end{array}
    \label{eq22}
\end{eqnarray}
Otherwise, nothing changes, that is, the algorithm sets $D_R(\beta+1, \gamma) = D_R(\beta, \gamma)$.

On the other hand, in order to update $D_R(\beta, \gamma)$ to $D_R(\beta, \gamma+1)$, the algorithm first sets $\mu_{f+1} = \gamma+1$ and 
compute $W_{f+1} = W_f - w_\gamma$ and $os(\gamma+1) = os(\gamma) - w_{\gamma+1}$.
Then, the algorithm repeatedly tests if 
$\tau v_{\gamma+1} + w_{\gamma+1}/c \ge \tau v_{\mu_j} + (W_j - os(\gamma+1))/c$ holds or not in descending order of $j$ from $j = f$.
If it holds, the algorithm sets $D_R(\beta, \gamma+1)$ so that
\begin{eqnarray}
    \begin{array}{lllll}
	\mu_i \leftarrow \mu_i \		& \mbox{and} \ & W_i \leftarrow W_i \					& \mbox{for} \ & 1 \le i \le j-1, \\
	\mu_{j} \leftarrow \mu_{j+1} \	& \mbox{and} \ & W_{j} \leftarrow W_{j+1}, \	& 	
    \end{array}
    \label{eq23}
\end{eqnarray}
until $\tau v_{\gamma+1} + w_{\gamma+1}/c < \tau v_{\mu_{f'}} + (W_{f'} - os(\gamma+1))/c$ holds for $j=f'$ with some integer $f' \le f$.
Let $t'(\beta, \gamma)$ denote the number of such tests required to update $D_R(\beta, \gamma)$ to $D_R(\beta, \gamma+1)$, which can be represent as
\begin{eqnarray}
t'(\beta, \gamma) = f - f' + 1 = |D_R(\beta, \gamma)| - |D_R(\beta, \gamma+1)| + 2,
\label{eq24}
\end{eqnarray}
which is amortized $O(1)$ by the same discussion as that for $t(\alpha, \beta)$ defined at (\ref{eq18}).

\begin{clm}
For any integers $\alpha, \beta$ and $\gamma$ with $1 \le \alpha \le \beta \le \gamma \le n$,
$L(\alpha, \beta)$ and $R(\beta, \gamma)$ can be computed in $O(1)$ time once $D_L(\alpha, \beta)$ and $D_R(\beta, \gamma)$ have been obtained.
\label{clm:ds1}
\end{clm}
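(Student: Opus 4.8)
The plan is to observe that the claimed $O(1)$ bound is immediate once one verifies that the closed-form expressions (\ref{eq14}) and (\ref{eq21}) are correct, since each references only a constant number of quantities already stored in the data structures. Thus the real content is confirming that the leading entries of $D_L(\alpha,\beta)$ and $D_R(\beta,\gamma)$ encode exactly the maximizers appearing in the definitions of $L(\alpha,\beta)$ and $R(\beta,\gamma)$.

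First I would recall that $L(\alpha,\beta)=L_\alpha(v_\beta)$, so by (\ref{eq3.1}) it equals $\max\{\tau(v_\beta-v_l)+\frac{1}{c}\sum_{\alpha\le h\le l}w_h \mid v_l\in[v_\alpha,v_\beta)\}$. Comparing this with the definition of $\rho_1$ in (\ref{eq13}), $\rho_1$ is by construction the index $l$ that attains this maximum, and since $\rho_0+1=\alpha$ the stored value $\sigma_1=\sum_{\alpha\le h\le \rho_1}w_h$ is exactly the accumulated supply in the maximizing term. Substituting $l=\rho_1$ gives precisely (\ref{eq14}), so reading off $\rho_1$ and $\sigma_1$ (the first entries of the two stored sequences) together with the precomputed positions $v_\beta,v_{\rho_1}$ and the fixed constants $\tau,c$ yields $L(\alpha,\beta)$ with a constant number of arithmetic operations.

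Next I would carry out the symmetric argument for $R(\beta,\gamma)=R_\gamma(v_\beta)$. By (\ref{eq3.2}) this equals $\max\{\tau(v_l-v_\beta)+\frac{1}{c}\sum_{l\le h\le\gamma}w_h \mid v_l\in(v_\beta,v_\gamma]\}$, and by (\ref{eq20}) $\mu_1$ is the maximizing index. The only step needing a brief check is that the stored suffix sum $W_1=\sum_{\mu_1\le h\le n}w_h$ and the offset $os(\gamma)=\sum_{\gamma+1\le h\le n}w_h$ telescope correctly, namely $W_1-os(\gamma)=\sum_{\mu_1\le h\le\gamma}w_h$, which is exactly the supply term for the maximizing index $\mu_1$. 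This converts the definition into (\ref{eq21}), again evaluable in $O(1)$ time from the three stored quantities $\mu_1,W_1,os(\gamma)$ plus the precomputed $v_{\mu_1},v_\beta$.

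Since no search over the stored sequences is required---both formulas touch only their first components and the scalar offset---I do not expect any genuine obstacle; the work is purely in matching the argmax definitions of $\rho_1$ and $\mu_1$ to the maxima in (\ref{eq3.1}) and (\ref{eq3.2}). The single point demanding care is the suffix-sum bookkeeping for $R$, where one must confirm the subtraction $W_1-os(\gamma)$ recovers the intended prefix $\sum_{\mu_1\le h\le\gamma}w_h$ rather than a suffix; this is a one-line verification from the definitions of $W_1$ and $os(\gamma)$.
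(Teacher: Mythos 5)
Your proposal is correct and matches the paper's (implicit) argument: the paper derives exactly the closed forms (\ref{eq14}) and (\ref{eq21}) from the argmax definitions (\ref{eq13}) and (\ref{eq20}), and the claim then follows because each formula reads only $\rho_1,\sigma_1$ (resp.\ $\mu_1,W_1,os(\gamma)$) plus fixed constants. Your added verification that $\sigma_1=\sum_{\alpha\le h\le\rho_1}w_h$ and $W_1-os(\gamma)=\sum_{\mu_1\le h\le\gamma}w_h$ is precisely the bookkeeping the paper leaves to the reader.
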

\begin{clm}
{\rm (i)} For any integers $\alpha, \beta$ and $\gamma$ with $1 \le \alpha < \beta < r \le n$,
$L(\alpha, \beta)$ and $R(\beta, \gamma)$ can be updated to $L(\alpha+1, \beta)$ and $R(\beta+1, \gamma)$ in amortized $O(1)$ time, respectively. \\
{\rm (ii)} For any integers $\alpha, \beta$ and $\gamma$ with $1 \le \alpha \le \beta \le \gamma \le n-1$,
$L(\alpha, \beta)$ and $R(\beta, \gamma)$ can be updated to $L(\alpha, \beta+1)$ and $R(\beta, \gamma+1)$ in amortized $O(1)$ time, respectively.
\label{clm:ds2}
\end{clm}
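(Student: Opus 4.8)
The plan is to treat the two parts separately, because part~(i) yields worst-case bounds whereas part~(ii) is genuinely amortized over the sequence of updates the algorithm executes. For part~(i), both updates only touch the head of the stored sequences. The rules (\ref{eq15}) and (\ref{eq16}) for $D_L(\alpha,\beta)\to D_L(\alpha+1,\beta)$ either discard the head pair $(\rho_1,\sigma_1)$ (when $\rho_1=\alpha$) or decrement $\sigma_1$ by $w_\alpha$; if $(\rho_i)$ and $(\sigma_i)$ are kept as deques with a head pointer, both cases cost worst-case $O(1)$, and by Claim~\ref{clm:ds1} the value $L(\alpha+1,\beta)$ is then read off in $O(1)$. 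The update $D_R(\beta,\gamma)\to D_R(\beta+1,\gamma)$ via (\ref{eq22}) is symmetric: it pops the head of $(\mu_i),(W_i)$ when $\mu_1=\beta+1$ and otherwise leaves the structure unchanged, so it is again worst-case $O(1)$. Hence every update of the form in part~(i) is $O(1)$, which trivially gives amortized $O(1)$.

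The content is in part~(ii), where a single extension may run $t(\alpha,\beta)$ tests with $t(\alpha,\beta)$ as large as $\Theta(n)$. I would run a potential argument with $\Phi=|D_L(\cdot,\cdot)|$. Each test of (\ref{eq17}) is $O(1)$, so one $\beta$-extension costs $\Theta(t(\alpha,\beta))$ actual time, while by the size identity (\ref{eq18}) the potential changes by $\Delta\Phi=|D_L(\alpha,\beta+1)|-|D_L(\alpha,\beta)|=2-t(\alpha,\beta)$; thus the amortized cost is $t(\alpha,\beta)+\Delta\Phi=2=O(1)$. To license the telescoping across successive outer indices I would invoke $|D_L(\alpha,l(\alpha))|\ge|D_L(\alpha+1,l(\alpha))|$, which (\ref{eq15}) and (\ref{eq16}) guarantee: the intervening cheap $\alpha$-increment never raises $\Phi$, so it only releases potential. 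Summing the amortized costs over the algorithm's actual operation sequence for a fixed $p$ then collapses exactly as in (\ref{eq19}) to $O(n-p)$, i.e.\ $O(1)$ per extension. For $D_R$ I would copy the argument with $\Phi=|D_R(\cdot,\cdot)|$, using (\ref{eq24}) in place of (\ref{eq18}) and (\ref{eq22}) in place of (\ref{eq15})--(\ref{eq16}).

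The step I expect to be delicate is the legitimacy of this telescoping, not any individual estimate. Per-operation amortization is meaningful only relative to the particular interleaving of cheap head-updates and expensive tail-extensions that the algorithm performs; an adversarial sequence could repeatedly inflate and deflate the structure and defeat an $O(1)$ bound. The argument survives here for exactly two reasons that I would verify carefully: (a) the identity (\ref{eq18}) (resp.\ (\ref{eq24})) ties each expensive extension to a proportional decrease of the potential, and (b) (\ref{eq15})--(\ref{eq16}) (resp.\ (\ref{eq22})) ensure the cheap updates never increase the potential between phases. I would therefore phrase the claim as an amortized bound over the algorithm's execution and confirm (a) and (b), after which the amortized $O(1)$ cost of each listed update is immediate.
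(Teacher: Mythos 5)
Your proof is correct and takes essentially the same route as the paper: the paper's argument is exactly your potential analysis written in aggregate form, namely the telescoping sum (\ref{eq19}) built from the size identity (\ref{eq18}) together with the monotonicity $|D_L(\alpha, l(\alpha))| \ge |D_L(\alpha+1, l(\alpha))|$ guaranteed by (\ref{eq15})--(\ref{eq16}), with the $O(1)$ head updates handled implicitly. Your transfer of the argument to $D_R$ via (\ref{eq24}) and (\ref{eq22}) is likewise precisely what the paper does.
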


\subsection{Time complexity}

As mentioned in Section \ref{sec:mma} and at the beginning of Section \ref{s2}, in order to obtain ${\sf OPT}(p, 1, i)$ for fixed $p$ and all $i=p+1, p+2, \ldots, n$, 
$O(n)$ intervals are tested in total as follows:
in order to test if there exists ${\bm x}^*(1, i, j)$ in an interval $[v_h, v_{h+1}]$ or not, 
the algorithm needs to confirm that $L(i, h) \ge R(h, j)$ and $L(i, h+1) \le R(h+1, j)$ hold by Claim \ref{clm2}, 
which takes $O(1)$ time once $D_L(i, h), D_L(i, h+1), D_R(h, j)$ and $D_R(h+1, j)$ have been obtained by Claim \ref{clm:ds1}.
Thus, such computations take $O(n)$ time in total.

On the other hand, let us consider the total time required to update the data structures.
For fixed $p$ and $i$, when ${\sf OPT}(p, 1, i-1)$ is obtained, the algorithm maintains $D_L(d_{p, i-1}+1, l), D_L(d_{p, i-1}+1, l+1), D_R(l, i-1)$ and $D_R(l+1, i-1)$,
where ${\bm x}^*(1, d_{p, i-1}+1, i-1)$ exists in $[v_l, v_{l+1}]$.
When ${\sf OPT}(p, 1, i)$ is obtained after repeatedly updating these four vertex sets, 
the algorithm maintains $D_L(d_{p, i}+1, l'), D_L(d_{p, i}+1, l'+1), D_R(l', i)$ and $D_R(l'+1, i)$,
where ${\bm x}^*(1, d_{p, i}+1, i)$ exists in $[v_{l'}, v_{l'+1}]$.
Recall that $d_{p, i-1} \le d_{p, i}$ and $l \le l'$ hold by Lemmas \ref{lem1} and \ref{lem2}.
Thus, in order to obtain ${\sf OPT}(p, 1, i)$, the algorithm updates the four vertex sets $2(d_{p, i}-d_{p, i-1}) + 4(l'-l) + 2$ times,
and so, for fixed $p$ and all $i=1, 2, \ldots, n$, the algorithm updates these sets $O(n)$ times in total,
which takes $O(n)$ time by Claim \ref{clm:ds2}.

Therefore, ${\sf OPT}(p, 1, i)$ for all $i=1, 2, \ldots, n$ and $p=1, 2, \ldots, k$ can be obtained in $O(kn)$ time.
\begin{thm}
The minimax $k$-sink location problem in a dynamic path network with uniform capacity can be solved in $O(kn)$ time.
\label{thm:mm1}
\end{thm}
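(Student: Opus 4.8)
The plan is to combine the dynamic-programming recursion with the two monotonicity lemmas and the amortized data-structure bounds, showing that the whole table of values ${\sf OPT}(p, 1, i)$ is filled in $O(kn)$ time and that the optimal $k$-sink location itself is recovered within the same bound.

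First I would confirm correctness by induction on $p$. For $p = 1$, both ${\sf OPT}(1, 1, i)$ and ${\bm x}^*(1, 1, i)$ are obtained directly from Claim \ref{clm2.1}. For the inductive step, (\ref{eq0.0}) expresses ${\sf OPT}(p, 1, i)$ as $\min_t f_{p, i}(t)$, and by Claim \ref{clm1} this $f_{p, i}(t)$ is unimodal in $t$; hence its minimizer $d_{p, i}$ from (\ref{eq1.1}) is exactly the first index at which $f_{p, i}$ ceases to strictly decrease. The scan of Section \ref{sec:mma} starts at $t = d_{p, i-1}$, which is legitimate because Lemma \ref{lem1} gives $d_{p, i-1} \le d_{p, i}$, and stops at the first $t^*$ with $f_{p, i}(t^*) \le f_{p, i}(t^* + 1)$; unimodality forces $t^* = d_{p, i}$, so (\ref{eq8}) returns the correct ${\sf OPT}(p, 1, i)$.

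Next I would bound the running time one layer at a time. Fix $p$. Across $i = p+1, \ldots, n$ the scan for $d_{p, i}$ resumes where the previous scan ended, and by Lemma \ref{lem1} the indices $d_{p, p}, \ldots, d_{p, n}$ are nondecreasing and bounded by $n$; thus the total number of $t$-values examined in the layer is $O(n)$. Each evaluation of $f_{p, i}(t)$ reduces to locating ${\bm x}^*(1, t+1, i)$, which is done by testing candidate intervals $[v_h, v_{h+1}]$ via the endpoint sign tests of Claim \ref{clm2}. By Lemma \ref{lem2} these optimal $1$-sink locations only move rightward as the scan advances, so the interval-search pointer is monotone and the total number of interval tests in the layer is again $O(n)$. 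By Claim \ref{clm:ds1} each test costs $O(1)$ once $D_L$ and $D_R$ are in place, and by Claim \ref{clm:ds2} advancing these structures costs amortized $O(1)$ per step; the telescoping estimate (\ref{eq19}) confirms that all updates in the layer sum to $O(n)$.

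Putting these together, each layer $p$ is handled in $O(n)$ time, so the $k$ layers cost $O(kn)$ in total. The optimal value of problem (\ref{pro1}) is ${\sf OPT}(k, 1, n)$, and the witnessing ${\bm x}^*(k, 1, n)$ and ${\bm d}^*(k, 1, n)$ are reconstructed by unrolling (\ref{eq2.1}) and (\ref{eq2.2}) through the stored divider indices, adding only $O(k)$ time. I expect the main obstacle to be the amortized argument that, within a single layer, the divider pointer $d_{p, i}$ and the sink-location interval pointers $l, l'$ each advance only $O(n)$ times in total even though they are interleaved across the different values of $i$; this is precisely what the potential-style telescoping in (\ref{eq19}) must secure, and the delicate point is charging the sizes $|D_L|$ so that the data-structure updates never accumulate beyond $O(n)$.
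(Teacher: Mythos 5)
Your proposal is correct and follows essentially the same route as the paper: the layer-by-layer dynamic program over $p$, with Lemma \ref{lem1} justifying the resumable scan for $d_{p,i}$, Claim \ref{clm1} justifying the stopping rule, Lemma \ref{lem2} making the interval-search pointer monotone, and Claims \ref{clm:ds1} and \ref{clm:ds2} with the telescoping bound (\ref{eq19}) giving $O(n)$ work per layer, hence $O(kn)$ overall. Your added remarks on correctness by induction and on recovering the witness via (\ref{eq2.1}) and (\ref{eq2.2}) simply make explicit what the paper leaves implicit.
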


\section{Minisum $k$-sink location problem}
\label{sec:minisum}
In this section, an input graph of this problem is a dynamic path network defined in Section \ref{sec:minimax}.
As a preliminary step, let us consider the minisum 1-sink location problem.

\subsection{Properties of the minisum 1-sink location problem}
\label{sec:minisum1}
Suppose that a sink is located at a point $x \in P$ where $P$ is the input path with $n+1$ vertices.
In continuous model, the cost is defined on each infinitesimal unit of supply,
i.e., the cost of $x$ for a unit is defined as the minimum time required to send the unit to $x$.
Let $sum(x)$ denote the total cost of $x$, i.e., the sum of cost of $x$ for all units on $P$.
Here, let $sum_L(x)$ (resp. $sum_R(x)$) denote the sum of cost of $x$ for all units on $[v_1, x)$ (resp. $(x, v_n]$).
Then, $sum(x)$ is the maximum of $sum_L(x)$ and $sum_R(x)$, i.e., 
\begin{eqnarray}
sum(x) = sum_L(x) + sum_R(x). \label{eq:ms1}
\end{eqnarray}
Without loss of generality, we assume $sum_L(v_1) = 0$ and $sum_R(v_n) = 0$.
Now, suppose that $x$ is located in an open interval $(v_h, v_{h+1})$ with $1 \le h \le n-1$,
then let us explain how function $sum_L(x)$ is determined.

\subsubsection{Case 1:}
For every integer $i$ with $1 \le i \le h$, $\tau(v_i-v_{i-1}) > w_i/c$ holds.
In this case, the first unit of each vertex on $[v_1, v_h]$ can reach $x$ after leaving the original vertex 
without being blocked due to the existence of other units at an intermediate vertex. 
For an integer $i$ with $1 \le i \le h$, let $sum^i(x)$ denote the sum of cost of $x$ for all units of $v_i$.
Here, suppose that there are $\alpha$ units at $v_i$ with sufficiently large $\alpha$, i.e., the size of each unit is equal to $w_i/\alpha$,
and these units continuously reach $x$.
Then by (\ref{eq3.1}), the $l$-th unit finishes reaching $x$ at time $\tau(x-v_i) + l \cdot (w_i/\alpha)/c$.
Therefore, by taking $\alpha$ to the infinity, $sum^i(x)$ can be represented as follows:
\begin{eqnarray}
sum^i(x)  	&=& \lim_{\alpha \to \infty} \sum_{l=1}^\alpha \frac{w_i}{\alpha}\left( \tau(x-v_i) + l \cdot \frac{w_i}{\alpha} \cdot \frac{1}{c} \right) \nonumber \\ 
			&=& \int_0^1 \left( w_i\tau(x-v_i) + \frac{{w_i}^2}{c} \cdot r \right) dr = w_i\tau(x-v_i) + \frac{{w_i}^2}{2c},
\end{eqnarray}
and also $sum_L(x)$ is represented as follows:
\begin{eqnarray}
sum_L(x)  	&=& \sum_{1 \le i \le h} sum^i(x) = \sum_{1 \le i \le h} \left( w_i\tau(x-v_i) + \frac{{w_i}^2}{2c} \right). \label{eq:ms2}
\end{eqnarray}

\subsubsection{Case 2:}
We define the vertex indices $\rho_1, \ldots, \rho_e$ as
\begin{eqnarray}
\begin{array}{ll}
	\rho_1 = \argmax \left\{ \tau(v_h - v_j) + \frac{\sum_{l=1}^j w_l}{c} \ \bigg| \ 1 \le j \le h \right\}				& \mbox{and} \\
	\rho_i = \argmax \left\{ \tau(v_h - v_j) + \frac{\sum_{l=\rho_{i-1}+1}^j w_l}{c} \ \bigg| \ \rho_{i-1} < j \le h \right\}	& \mbox{for} \ 2 \le i \le e.
    \end{array}
 \label{ms2.5}
\end{eqnarray}
Note that $\rho_e = h$ holds. 
For every integer $i$ with $1 \le i \le e$, we also define the value of $\rho_i$ as $\sigma_i = \sum \{ w_h \mid \rho_{i-1}+1 \le h \le \rho_i \}$ where $\rho_0 = 0$.
Here, we notice that for every integer $i$ with $2 \le i \le e$, the first unit of $v_{\rho_{i-1}}$ never be induced to stop at any vertex $v_j$ with $\rho_{i-1} < j < \beta$.
Then, as with (\ref{eq:ms2}), $sum_L(x)$ is represented as follows:
\begin{eqnarray}
sum_L(x)  	&=& \sum_{1 \le i \le e} \left( \sigma_i\tau(x-\rho_i) + \frac{{\sigma_i}^2}{2c} \right). \label{eq:ms3}
\end{eqnarray}
Note that $\sum_{1 \le i \le h^*} \sigma_i = \sum_{1 \le i \le h} w_i$ holds.

\vspace{4mm}

We can compute $sum_R(x)$ in the similar manner as $sum_L(x)$.
Thus, for an open interval $(v_j, v_{j+1})$ with $1 \le j \le n-1$, function $sum(x)$ is linear in $x$ with slope $\tau(\sum_{1 \le i \le j} w_i - \sum_{j+1 \le i \le n} w_i)$.
Now let us consider an open interval $(v_j, v_{j+1})$ with $1 \le j \le n-1$ such that $\sum_{1 \le i \le j} w_i - \sum_{j+1 \le i \le n} w_i \ge 0$ holds.
Then, we can see that for any two points $p, q \in (v_j, v_{j+1})$ with $p < q$, $sum(p) \le sum(q)$ holds.
We will show that for sufficiently small $\epsilon > 0$, $sum(v_j) \le sum(v_j + \epsilon)$ holds.
We confirm
\begin{eqnarray}
sum_R(v_j)	&=&	sum_R(v_j + \epsilon) + \left( \sum_{j+1 \le i \le n} w_i \right) \cdot \tau \epsilon, \ \ \mbox{and} \label{eq:ms4} \\
sum_L(v_j + \epsilon)	&\ge&	sum_L(v_j) + \left( \sum_{1 \le i \le j} w_i \right) \cdot \tau \epsilon. \label{eq:ms5}
\end{eqnarray}
From (\ref{eq:ms4}), (\ref{eq:ms5}) and the assumption of $\sum_{1 \le i \le j} w_i - \sum_{j+1 \le i \le n} w_i \ge 0$, 
we can derive $sum(v_j) \le sum(v_j + \epsilon)$.
In general, we have the following claim.
\begin{clm}
{\rm (i)} For an open interval $(v_j, v_{j+1})$ with $1 \le j \le n-1$ such that $\sum_{1 \le i \le j} w_i - \sum_{j+1 \le i \le n} w_i \ge 0$,
$sum(v_j) \le sum(p)$ holds where $p \in (v_j, v_{j+1})$. \\
{\rm (ii)} For an open interval $(v_j, v_{j+1})$ with $1 \le j \le n-1$ such that $\sum_{1 \le i \le j} w_i - \sum_{j+1 \le i \le n} w_i < 0$,
$sum(v_{j+1}) < sum(p)$ holds where $p \in (v_j, v_{j+1})$. 
\label{clm:ms1}
\end{clm}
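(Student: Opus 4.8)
The plan is to reduce each of the two parts to a single boundary comparison at the appropriate endpoint of the interval, and then invoke the linearity of $sum(x)$ on the open interval $(v_j, v_{j+1})$ already recorded above, where the slope equals $\tau(\sum_{1\le i\le j} w_i - \sum_{j+1\le i\le n} w_i)$. For part (i) the relevant endpoint is the left one $v_j$, and the comparison $sum(v_j)\le sum(v_j+\epsilon)$ has in effect already been derived from (\ref{eq:ms4}) and (\ref{eq:ms5}); for part (ii) I will mirror that derivation at the right endpoint $v_{j+1}$.

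First, for (i): adding (\ref{eq:ms5}) to (\ref{eq:ms4}) and using (\ref{eq:ms1}) gives
\begin{eqnarray*}
sum(v_j+\epsilon) &\ge& sum(v_j) + \Big( \sum_{1\le i\le j} w_i - \sum_{j+1\le i\le n} w_i \Big)\tau\epsilon \ \ge\ sum(v_j),
\end{eqnarray*}
where the last step uses the hypothesis $\sum_{1\le i\le j} w_i - \sum_{j+1\le i\le n} w_i \ge 0$. Since on the open interval $sum(x)$ is linear with nonnegative slope, it is nondecreasing there, so for any $p\in(v_j,v_{j+1})$, choosing $\epsilon<p-v_j$ yields $sum(v_j)\le sum(v_j+\epsilon)\le sum(p)$, which is (i).

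For (ii) I would establish the left--right mirror images of (\ref{eq:ms4}) and (\ref{eq:ms5}) at $v_{j+1}$. Because the left set $\{v_1,\ldots,v_j\}$ is unchanged when the sink moves from $v_{j+1}$ to $v_{j+1}-\epsilon$, a uniform distance shift gives the equality $sum_L(v_{j+1}) = sum_L(v_{j+1}-\epsilon) + (\sum_{1\le i\le j} w_i)\tau\epsilon$; because that same move inserts $v_{j+1}$ into the right set as the vertex nearest the sink, the accounting behind (\ref{eq:ms5}) gives $sum_R(v_{j+1}-\epsilon) \ge sum_R(v_{j+1}) + (\sum_{j+1\le i\le n} w_i)\tau\epsilon$. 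Combining these with (\ref{eq:ms1}) yields
\begin{eqnarray*}
sum(v_{j+1}-\epsilon) &\ge& sum(v_{j+1}) + \Big( \sum_{j+1\le i\le n} w_i - \sum_{1\le i\le j} w_i \Big)\tau\epsilon \ >\ sum(v_{j+1}),
\end{eqnarray*}
where strictness comes from the case hypothesis $\sum_{1\le i\le j} w_i - \sum_{j+1\le i\le n} w_i < 0$ together with $\epsilon>0$. Taking $\epsilon=v_{j+1}-p$ for any $p\in(v_j,v_{j+1})$ then gives $sum(v_{j+1})<sum(p)$, which is (ii); alternatively one notes that $sum(x)$ is strictly decreasing on the open interval and combines this with $sum(v_{j+1})<sum(v_{j+1}-\epsilon)$.

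The main obstacle is justifying the two one-sided accounting relations in full generality (the congested Case 2), not merely in Case 1. The equality pieces, namely (\ref{eq:ms4}) and its mirror, are clean because the vertex set on the corresponding side is unchanged and every unit's travel distance shifts by exactly $\epsilon$, leaving the congestion pattern intact. The inequality pieces, (\ref{eq:ms5}) and its mirror, need more care: the endpoint move inserts a single new vertex ($v_j$, resp.\ $v_{j+1}$) as the vertex closest to the sink on its side, and I must argue that its own contribution is at least its free-flow travel cost, its weight times $\tau\epsilon$, obtained by discarding the nonnegative self-congestion term $w^2/(2c)$, and that inserting a vertex nearer the sink can only weakly increase the arrival times, hence the costs, of the vertices farther out on that side. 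This monotonicity is the crux; it follows because in the continuous evacuation dynamics a closer, earlier-departing cluster can only delay, never speed up, those behind it, and it is exactly what makes the relations hold beyond the uncongested Case 1.
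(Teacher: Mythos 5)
Your proof is correct and follows essentially the same route as the paper: combine the boundary relations (\ref{eq:ms4}) and (\ref{eq:ms5}) with the linearity of $sum(x)$ on the open interval to settle part (i), and mirror that argument at $v_{j+1}$ for part (ii), which the paper leaves implicit under ``in general, we have the following claim.'' Your closing remark about justifying (\ref{eq:ms5}) in the congested case identifies a step the paper also asserts without proof, and your sketch (a newly inserted closer cluster can only delay units behind it, and its own cost is at least $w\tau\epsilon$) is the right justification.
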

Let $x^*$ denote the optimal sink location which minimizes $sum(x)$.
Then, Claim \ref{clm:ms1} implies that $x^*$ is located at some vertex.
\begin{clm}
There exists $x^*$ at a vertex.
\label{clm:ms2}
\end{clm}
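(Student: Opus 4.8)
The plan is to derive this directly from Claim \ref{clm:ms1}. First I would observe that $sum(x)$ is continuous on the compact path $P = [v_1, v_n]$: it is piecewise linear, being linear on each open interval $(v_j, v_{j+1})$ with the slope $\tau(\sum_{1 \le i \le j} w_i - \sum_{j+1 \le i \le n} w_i)$ computed above and continuous across the vertices. Hence a global minimizer exists. It then suffices to show that for every point $x \in P$ there is a vertex $v_t$ with $sum(v_t) \le sum(x)$; this immediately forces the minimum of $sum$ over $P$ to coincide with the minimum over the finite vertex set $\{v_1, \ldots, v_n\}$, and therefore to be attained at a vertex, which we may then take as $x^*$.

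Next I would take an arbitrary point $x$. If $x$ is itself a vertex there is nothing to prove, so assume $x \in (v_j, v_{j+1})$ for some $j$ with $1 \le j \le n-1$. The argument then splits exactly along the two cases of Claim \ref{clm:ms1}, according to the sign of $\sum_{1 \le i \le j} w_i - \sum_{j+1 \le i \le n} w_i$, which is (up to the factor $\tau$) the slope of $sum$ on this interval. In case (i), when the slope is nonnegative, Claim \ref{clm:ms1}(i) gives $sum(v_j) \le sum(x)$, so the left endpoint $v_j$ is the desired vertex; in case (ii), when the slope is negative, Claim \ref{clm:ms1}(ii) gives $sum(v_{j+1}) < sum(x)$, so the right endpoint $v_{j+1}$ works. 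In either case some vertex attains a value no larger than $sum(x)$.

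Combining the two steps, the global minimum of $sum$ over $P$ is attained at one of the vertices, which is precisely the assertion that there exists an optimal location $x^*$ at a vertex. There is essentially no obstacle beyond correctly invoking Claim \ref{clm:ms1}; the only point requiring a moment of care is the continuity of $sum$ at the vertices, which guarantees that no interior point can strictly beat both of its enclosing endpoints and thus lets the per-interval endpoint bounds patch together into the global conclusion.
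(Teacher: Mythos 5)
Your core argument is the same as the paper's: Claim \ref{clm:ms1} gives, for every non-vertex point $x \in (v_j, v_{j+1})$, an enclosing vertex whose cost is at most $sum(x)$, and since the vertex set is finite, the vertex minimizing $sum$ over $\{v_1, \ldots, v_n\}$ attains the infimum of $sum$ over all of $P$. This endpoint-domination step is exactly how the paper deduces Claim \ref{clm:ms2}, and that part of your proof is sound.

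However, the continuity assertion you make --- and single out at the end as ``the only point requiring a moment of care'' --- is false in this model. The function $sum$ is \emph{not} continuous across vertices: the model assumes that supply co-located with the sink evacuates in no time, so $sum(v_j)$ counts nothing for $w_j$, whereas for a sink at $v_j + \epsilon$ the supply $w_j$ must enter the edge at rate $c$ and contributes at least $w_j\tau\epsilon + w_j^2/(2c)$ (plus possible extra blocking of the supplies behind it). Hence $\lim_{\epsilon \to 0^+} sum(v_j + \epsilon) \ge sum(v_j) + w_j^2/(2c) > sum(v_j)$, i.e., $sum$ has a downward jump at each vertex; this is precisely why (\ref{eq:ms5}) is an inequality rather than an equality. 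Fortunately, your proof does not need continuity at all: the per-interval endpoint bounds already give $\min_t sum(v_t) \le \inf_{x \in P} sum(x)$, and the reverse inequality is trivial because the vertices lie in $P$, so the infimum is attained at a vertex. Delete the continuity claim (and the final sentence resting on it), and what remains is a correct proof that coincides with the paper's.
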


\subsection{Algorithm and time complexity for the minisum 1-sink location problem}
We propose the algorithm which can solve the minisum 1-sink location problem in a dynamic path network.
Basically, the algorithm first computes $sum_L(v_i)$ for $2 \le i \le n$ in ascending order of $i$, and next $sum_R(v_i)$ for $1 \le i \le n-1$ in descending order of $i$.
After computing all these values, $sum(v_i)$ can be computed and evaluated for $1 \le i \le n$ in $O(n)$ time.
Then, by Claim \ref{clm:ms2}, the optimal sink location $x^*$ is at a vertex which minimizes $sum(v_i)$ for $1 \le i \le n$.
Below, we show how to compute $sum_L(v_i)$ (computation of $sum_R(v_i)$ can be treated in the similar manner).

First, the algorithm sets $\rho_1 = 1$, $\sigma_1 = w_1$.
By (\ref{eq:ms2}), $sum_L(v_2)$ is computed in $O(1)$ time as follows:
\begin{eqnarray}
sum_L(v_1)	&=&	 \sigma_1\tau(v_2-v_{\rho_1}) + \frac{{\sigma_1}^2}{2c}. \label{eq:ms6}
\end{eqnarray}
Now, suppose that for some integer $j$ with $1 \le j \le n-1$, $h(j)$ has been set as a non-negative integer,
$\rho_i$ and $\sigma_i$ have been obtained for all $i$ with $1 \le i \le h(j)$ in the same manner as mentioned in Case 2, Section \ref{sec:minisum1},
and $sum_L(v_j)$ has been already computed as follows:
\begin{eqnarray}
sum_L(v_j)  	&=& \sum_{1 \le i \le h(j)} \left( \sigma_i\tau(v_j-v_{\rho_i}) + \frac{{\sigma_i}^2}{2c} \right). \label{eq:ms7}
\end{eqnarray}
Let $W_{j-1} = \sum_{1 \le i \le j-1} w_i = \sum_{1 \le i \le h(j)} \sigma_i$ and suppose that $W_{j-1}$ has also been computed.
We then show how to compute $sum_L(v_{j+1})$.
The algorithm newly sets 
\begin{eqnarray}
sum' = sum_L(v_j), \ \ \mbox{and} \ \ W' = W_{j-1}.
\end{eqnarray}
Next, the algorithm tests if $\tau(v_j-v_{\rho_i}) \le w_j/c$ for $1 \le i \le h(j)$ in descending order.
If so, it updates $sum'$ and $W'$ as follows:
\begin{eqnarray}
sum'	\leftarrow sum' - \left( \sigma_i\tau(v_j-v_{\rho_i}) + \frac{{\sigma_i}^2}{2c} \right), \ \ \mbox{and} \ \ W' \leftarrow W' - \sigma_i,
\end{eqnarray}
and deletes $\rho_i$.
If the maximum integer $m$ such that $\tau(v_j-v_{\rho_m}) > w_j/c$ is found or $\tau(v_j-v_{\rho_1}) \le w_j/c$ is obtained, the algorithm stops testing.
In the former case, after the algorithm tests $h(j)-m+1$ times, $\rho_1, \ldots, \rho_{m}$ remain.
Then, after computing $W_j$ as $W_j = W_{j-1} + w_j$,
by (\ref{eq:ms3}), $sum_L(v_{j+1})$ can be computed as
\begin{eqnarray}
sum_L(v_{j+1})	= sum' &+& W'\tau(v_{j+1}-v_j) + \nonumber \\
				&  & \left( (W_j - W')\tau(v_{j+1}-v_j) + \frac{(W_j - W')^2}{2c} \right).
\end{eqnarray}
Also, for the next recursive step, the algorithm eventually sets 
\begin{eqnarray}
h(j+1)=m+1, \ \ \rho_{m+1} = j, \ \ \mbox{and} \ \ \sigma_{m+1} = W_j - W'.
\end{eqnarray}

Since the algorithm tests $h(j)-m+1 = h(j)-h(j+1)+2$ times to compute $sum_L(v_{j+1})$,
it needs to test $\sum_{1 \le i \le n-1} (h(i)-h(i+1)+2)$ times to compute $sum_L(v_i)$ for $2 \le i \le n$.
Here, by $h(1) = 0$, we have
\begin{eqnarray}
\sum_{1 \le i \le n-1} \left( h(i)-h(i+1)+2 \right) = - h(n) + 2(n-1) = O(n).
\end{eqnarray}

\begin{lem}
The minisum 1-sink location problem in a dynamic path network with uniform capacity can be solved in $O(n)$ time.
\label{lem:ms1}
\end{lem}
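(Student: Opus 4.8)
The plan is to split the argument into a correctness part and a running-time part. For correctness I would first appeal to Claim~\ref{clm:ms1}: on each open interval $(v_j,v_{j+1})$ the function $sum(x)$ is linear with slope governed by the sign of $\sum_{1\le i\le j}w_i-\sum_{j+1\le i\le n}w_i$, so it attains its minimum on that interval at an endpoint. Claim~\ref{clm:ms2} then places the global minimizer $x^*$ at a vertex. Hence it suffices to compute $sum(v_i)=sum_L(v_i)+sum_R(v_i)$ at all $n$ vertices and return the smallest; given the $2n$ partial sums, this final scan is clearly $O(n)$. So everything reduces to producing all the values $sum_L(v_i)$ and $sum_R(v_i)$ in linear time.

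The substance is the incremental sweep that outputs $sum_L(v_2),\dots,sum_L(v_n)$, and I would establish it by maintaining the cluster decomposition $(\rho_1,\sigma_1),\dots,(\rho_{h(j)},\sigma_{h(j)})$ of $[v_1,v_j]$ from~(\ref{ms2.5}) as a stack as $j$ grows. I would prove by induction on $j$ that the stack always stores exactly the decomposition of~(\ref{ms2.5}) and that the closed form~(\ref{eq:ms3}) then evaluates $sum_L(v_j)$. When the sink advances from $v_j$ to $v_{j+1}$, the fresh supply at $v_j$ blocks the first unit of a previous cluster headed at $\rho_i$ precisely when $\tau(v_j-v_{\rho_i})\le w_j/c$; I would show this test is monotone as one walks down the stack, so the coalescing clusters form a contiguous top segment that is popped and replaced by a single new cluster headed at $v_j$. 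The retained clusters keep their quadratic self-terms and merely absorb the common translation $W'\tau(v_{j+1}-v_j)$, while the merged cluster contributes a fresh self-term of the shape appearing in~(\ref{eq:ms2}); summing reproduces~(\ref{eq:ms3}), which is exactly what the incremental update performs.

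For the running time I would charge the step $j\to j+1$ to the $h(j)-h(j+1)+2$ tests it performs (one per popped cluster, one failed test, one push) and then telescope. Using $h(1)=0$ and $h(n)\ge 0$, the total is $\sum_{1\le i\le n-1}\bigl(h(i)-h(i+1)+2\bigr)=-h(n)+2(n-1)=O(n)$, so each cluster is pushed and popped at most once and the entire $sum_L$ sweep runs in linear time; the symmetric right-to-left sweep yields all $sum_R(v_i)$ in $O(n)$ time, and together with the $O(n)$ final scan this gives the claimed bound. The hard part will be the correctness of the stack invariant above: one must verify that the single comparison $\tau(v_j-v_{\rho_i})\le w_j/c$ exactly characterizes which clusters merge and that merging them preserves the argmax property~(\ref{ms2.5}), since only this simultaneously justifies the validity of~(\ref{eq:ms3}) and the amortized constant cost per step.
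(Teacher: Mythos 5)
Your plan reproduces the paper's own proof essentially step for step: the same reduction to vertices via Claims~\ref{clm:ms1} and~\ref{clm:ms2}, the same stack of clusters $(\rho_i,\sigma_i)$ from~(\ref{ms2.5}) maintained during a left-to-right sweep, the same incremental evaluation of~(\ref{eq:ms3}), and the same telescoping count $\sum_{1\le i\le n-1}(h(i)-h(i+1)+2)=O(n)$. However, the invariant you yourself single out as ``the hard part'' is false as stated, and it cannot be verified. The comparison $\tau(v_j-v_{\rho_i})\le w_j/c$ does \emph{not} exactly characterize which clusters merge: once the clusters above cluster $i$ in the stack have coalesced with the queue at $v_j$, their weight is sitting in that queue as well, so the first unit of $v_{\rho_i}$ is blocked at $v_j$ if and only if $\tau(v_j-v_{\rho_i})\le \bigl(w_j+\sum_{l}\sigma_l\bigr)/c$, the sum running over the already-merged clusters. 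Concretely, take $\tau=c=1$, $v_1=0$, $v_2=1.8$, $v_3=2.8$, $v_4=3.3$, $w_1=w_2=1$, $w_3=2$. With sink $v_3$ the stack is $(\rho_1,\sigma_1)=(1,1)$, $(\rho_2,\sigma_2)=(2,1)$. Advancing the sink to $v_4$, cluster $2$ merges ($1\le 2$), and your test then retains cluster $1$ ($2.8>2$), yielding
\begin{equation*}
sum_L(v_4)=\Bigl(1\cdot 3.3+\tfrac{1}{2}\Bigr)+\Bigl(3\cdot 0.5+\tfrac{9}{2}\Bigr)=9.8 .
\end{equation*}
But the queue at $v_3$ only empties at time $(w_3+\sigma_2)/c=3>2.8$, so the supply of $v_1$ is in fact also blocked there: all supply forms a single cluster headed at $v_3$, and the true value is $4\cdot 0.5+\tfrac{16}{2}=10$. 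So the merge rule under-merges, the stack after the update is no longer the decomposition~(\ref{ms2.5}), and the computed sums are wrong.

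The repair is cheap and is exactly what the paper does for its \emph{minimax} data structure in Section~\ref{s2}: in the update~(\ref{eq17}) of $D_L$ the test is made against the \emph{accumulated} weight of the cluster being grown, not against the weight of the single new vertex. In your notation, since you already maintain $W'$, you should pop cluster $i$ precisely when $\tau(v_j-v_{\rho_i})\le (W_j-W')/c$, where $W_j-W'$ is the current weight of the new cluster headed at $v_j$; the gap property $\tau(v_{\rho_{i+1}}-v_{\rho_i})\ge \sigma_{i+1}/c$, which follows from the argmax conditions in~(\ref{ms2.5}), then guarantees that the popped clusters are a contiguous top segment, and the amortized analysis is unchanged because each test is still $O(1)$. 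I should note, in fairness, that the paper's own minisum write-up states the same weak test $\tau(v_j-v_{\rho_i})\le w_j/c$, so you have faithfully reconstructed its argument, defect included; the lemma itself is true, but neither your plan nor the paper's text as literally written proves it until the merge test is corrected.
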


\subsection{Extension to the minisum $k$-sink location problem}
Let ${\bm x} = (x_1, x_2, \ldots, x_k)$ representing a $k$-sink location given on $P$ and ${\bm d} = (d_1, d_2,$ $\ldots, d_{k-1})$ representing a $(k-1)$-divider given on $P$
(which are defined in the same manner as mentioned in Section \ref{sec:mmp}).
For a given ${\bm d}$, we need only consider ${\bm x}$ such that $x_i $ is given on $[v_{d_{i-1}+1}, v_{d_i}]$ for $1 \le i \le k$, where $d_0 = 0$ and $d_k = n$.
For given ${\bm x}$ and ${\bm d}$,
and for an integer $i$ with $1 \le i \le k$, let $sum_i({\bm x}, {\bm d})$ denote the sum of cost of $x_i$ for all supplies on $[v_{d_{i-1}+1}, v_{d_i}]$.
Letting $sum({\bm x}, {\bm d}) = \sum \{ sum_i({\bm x}, {\bm d}) \mid 1 \le i \le k \}$, the minisum $k$-sink location problem is defined as follows:
\begin{eqnarray}
{\rm Q_{minisum}}(P): \ {\rm minimize} \ \left\{ sum({\bm x}, {\bm d}) \ \bigg| \ {\bm x} \in P^k \ {\rm and} \ {\bm d} \in \{ 1, 2, \ldots, n \}^{k-1} \right\}.
\label{pro2}
\end{eqnarray}

We below show that this problem can be transformed to an equivalent problem, which requires to find the minimum $k$-link path in a weighted, complete, directed acyclic graph (DAG) \cite{s98}.
First, for integers $i$ and $j$ with $1 \le i < j \le n+1$, let ${\sf OPT}(i, j)$ denote the optimal cost for the minisum $1$-sink location problem in $[v_i, v_{j-1}]$.
Let us consider a DAG $G = (N, A)$ such that $N = \{ u_1, u_2, \ldots, u_n, u_{n+1} \}$ and
for every vertex pair $(u_i, u_j)$ with $1 \le i < j \le n+1$, 
there exists an edge which is directed from $u_i$ to $u_j$ and associated with the weight of ${\sf OPT}(i, j)$.
Then, ${\rm Q_{minisum}}(P)$ is equivalent to a problem requiring to find a path in $G$ from $u_1$ to $u_{n+1}$ which contains exactly $k$ edges such that the sum of weights is minimized.
Schieber \cite{s98} showed that this problem can be solved by querying edge weights $O(n \cdot \min \{ k, 2^{\sqrt{\log k \log \log n}}\})$ times
if the input DAG satisfies the {\it concave Monge property}, that is, ${\sf OPT}(i, j) + {\sf OPT}(i+1, j+1) \le {\sf OPT}(i+1, j) + {\sf OPT}(i, j+1)$ holds for any integers $i$ and $j$ with $1 \le i+1 < j \le n$.
Since each weight query takes $O(n)$ time by Lemma \ref{lem:ms1}, if the concave Monge property is proved, ${\rm Q_{minisum}}(P)$ can be solved in $O(n^2 \cdot \min \{ k, 2^{\sqrt{\log k \log \log n}}\})$ time.
Therefore, we prove the following lemma.
\begin{lem}
For any integers $i$ and $j$ with $1 \le i+1 < j \le n$,
${\sf OPT}(i, j) + {\sf OPT}(i+1, j+1) \le {\sf OPT}(i+1, j) + {\sf OPT}(i, j+1)$ holds.
\label{lem:ms2}
\end{lem}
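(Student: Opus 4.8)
The plan is to exploit the two facts already established for the minisum $1$-sink problem: that an optimal sink always sits at a vertex (Claim \ref{clm:ms2}), and that the cost splits additively around the sink as in (\ref{eq:ms1}). The crucial point is a \emph{separability} of the vertex-sink cost. Since units approaching the sink from the left never interact with units approaching from the right, and the congestion on each side is governed only by the vertices on that side, placing the sink at $v_m$ on an interval $[v_p,v_q]$ costs $SL(p,m)+SR(m,q)$, where $SL(p,m)$ is the total cost of evacuating $[v_p,v_{m-1}]$ to $v_m$ and depends only on $(p,m)$, and $SR(m,q)$ is the total cost of evacuating $[v_{m+1},v_q]$ to $v_m$ and depends only on $(m,q)$ (both computed via the cluster decomposition of (\ref{eq20})--(\ref{eq:ms3})). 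Hence ${\sf OPT}(i,j)=\min_m\{SL(i,m)+SR(m,j-1)\}$, and analogously for the other three quantities, whose four intervals are $[v_i,v_{j-1}]$, $[v_{i+1},v_j]$, the inner $[v_{i+1},v_{j-1}]$ and the outer $[v_i,v_j]$.

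Next I would run an exchange argument driven by the inner and outer optima. Let $v_{m_1}$ be an optimal sink for the outer interval, so ${\sf OPT}(i,j+1)=SL(i,m_1)+SR(m_1,j)$ with $i\le m_1\le j$, and let $v_{m_2}$ be an optimal sink for the inner interval, so ${\sf OPT}(i+1,j)=SL(i+1,m_2)+SR(m_2,j-1)$ with $i+1\le m_2\le j-1$. Because ${\sf OPT}(i,j)$ and ${\sf OPT}(i+1,j+1)$ are minima over the sink, substituting any admissible vertex into their objectives gives an upper bound, and I split on the order of $m_1,m_2$. If $m_1\le m_2$, I route $v_{m_1}$ to $[v_i,v_{j-1}]$ and $v_{m_2}$ to $[v_{i+1},v_j]$ (both admissible, as $m_1\le m_2\le j-1$ and $i+1\le m_2\le j$); summing the two bounds and subtracting the target right-hand side, the $SL$ terms cancel and the inequality collapses to
\[
SR(m_1,j)-SR(m_1,j-1)\ \ge\ SR(m_2,j)-SR(m_2,j-1).
\]
If instead $m_2<m_1$, I route $v_{m_2}$ to $[v_i,v_{j-1}]$ and $v_{m_1}$ to $[v_{i+1},v_j]$; now the $SR$ terms cancel and the inequality reduces to $SL(i,m_1)-SL(i+1,m_1)\ge SL(i,m_2)-SL(i+1,m_2)$.

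Everything thus reduces to two discrete monotonicities that are mirror images under the left--right symmetry of the path, so it suffices to prove one: the marginal cost $\Delta_R(m):=SR(m,j)-SR(m,j-1)$ of appending the rightmost vertex $v_j$ is non-increasing in the sink index $m$ (equivalently, $\Delta_L(m):=SL(i,m)-SL(i+1,m)$ is non-decreasing). The structural observation making this tractable is that the $w_j$ units originating at the extreme vertex $v_j$ are the last to cross any point on their way left to $v_m$, so they never delay the units coming from $(v_m,v_{j-1}]$; consequently $\Delta_R(m)$ equals \emph{exactly} the total cost incurred by $v_j$'s own units as they queue behind the stream from $(v_m,v_{j-1}]$. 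Using the $\mu_i,W_i$ cluster form of (\ref{eq20})--(\ref{eq21}), this cost is a closed expression whose free-flow part $\tau(v_j-v_m)$ and whose congestion part (governed by $\sum_{v_m<v_t\le v_j}w_t$) both only shrink as $m$ increases, giving the required monotonicity.

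I expect this last monotonicity of the appended-vertex cost to be the main obstacle, since it is the one place where the congestion interaction must actually be controlled rather than merely separated; once it is in hand the two cases close immediately. One subtlety worth flagging is the upward jump of $sum_L$ and $sum_R$ at a vertex (a vertex's supply costs zero only when it hosts the sink), which is the reason (\ref{eq:ms5}) is an inequality; this is sidestepped throughout because Claim \ref{clm:ms2} lets me restrict to sinks located at vertices, so every quantity $SL,SR,\Delta_R,\Delta_L$ is evaluated only at integer indices and the discontinuity never intervenes.
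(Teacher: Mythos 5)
Your proposal is correct and follows essentially the same route as the paper: an exchange argument that evaluates the objectives of ${\sf OPT}(i,j)$ and ${\sf OPT}(i+1,j+1)$ at the optimal sinks of the inner problem ${\sf OPT}(i+1,j)$ and the outer problem ${\sf OPT}(i,j+1)$, with a case split on the relative order of those two sinks, reducing everything to the monotonicity of the marginal cost of the extreme vertex as the sink moves away. Your $\Delta_L$/$\Delta_R$ monotonicity is exactly the paper's key inequality (\ref{eq:ms2.5}) (stated there for continuous sink positions, which you harmlessly restrict to vertices via Claim \ref{clm:ms2}); if anything, your structural justification of that step is more explicit than the paper's, which asserts it without proof.
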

\begin{proof}
For integers $i$ and $j$ with $1 \le i < j \le n+1$, and a 1-sink location $x \in [v_i, v_{j-1}]$,
let $sum_{i, j}(x)$ denote the sum of cost of $x$ for all supplies on $[v_i, v_{j-1}]$,
and let $sum_L^i(x)$ (resp. $sum_R^j(x)$) denote the sum of cost of $x$ for all supplies on $[v_i, x)$ (resp. $(x, v_{j-1}]$).
Also, let $x^*(i, j) = \argmin \{ sum_{i, j}(x) \mid x \in [v_i, v_{j-1}] \}$.
By the definitions, we have
\begin{eqnarray}
sum_{i, j}(x) 	&=&	 sum_L^i(x) + sum_R^j(x), \ \mbox{and} \label{eq:ms2.1} \\
{\sf OPT}(i, j) 	&=&	 sum_{i, j}(x^*(i, j)). \label{eq:ms2.2}
\end{eqnarray}
Then, we consider two cases: [Case 1] $x^*(i+1, j) \le x^*(i, j+1)$ and [Case 2] $x^*(i+1, j) > x^*(i, j+1)$.
Here, let us prove only Case 1 (Case 2 can be symmetrically proved).
We first show that 
\begin{eqnarray}
sum_{i, j}(x^*(i+1, j)) &-&sum_{i+1, j}(x^*(i+1, j)) \nonumber \\
&\le& sum_{i, j+1}(x^*(i, j+1)) - sum_{i+1, j+1}(x^*(i, j+1)).
\label{eq:ms2.3}
\end{eqnarray}
By (\ref{eq:ms2.1}), the left side of (\ref{eq:ms2.3}) is equal to $sum_L^i(x^*(i+1, j)) - sum_L^{i+1}(x^*(i+1, j))$ and the right side of (\ref{eq:ms2.3}) is equal to $sum_L^i(x^*(i, j+1)) - sum_L^{i+1}(x^*(i, j+1))$.
Let $D = sum_L^{i+1}(x^*(i, j+1)) - sum_L^{i+1}(x^*(i+1, j))$ (clearly $D > 0$), that is,
\begin{eqnarray}
sum_L^{i+1}(x^*(i, j+1)) = sum_L^{i+1}(x^*(i+1, j)) + D.
\label{eq:ms2.4}
\end{eqnarray}
Then, we have 
\begin{eqnarray}
sum_L^i(x^*(i, j+1)) \ge sum_L^i(x^*(i+1, j)) + D.
\label{eq:ms2.5}
\end{eqnarray}
By (\ref{eq:ms2.4}) and (\ref{eq:ms2.5}), we obtain
\begin{eqnarray}
sum_L^i(x^*(i+1, j)) &-& sum_L^{i+1}(x^*(i+1, j)) \nonumber \\
&\le& sum_L^i(x^*(i, j+1)) - sum_L^{i+1}(x^*(i, j+1)),
\label{eq:ms2.6}
\end{eqnarray}
which is equivalent to (\ref{eq:ms2.3}) as mentioned above.
On the other hand, by the optimality of ${\sf OPT}(i, j)$ and ${\sf OPT}(i+1, j+1)$, we have
\begin{eqnarray}
sum_{i, j}(x^*(i+1, j)) 		&\ge& {\sf OPT}(i, j), \ \mbox{and} \label{eq:ms2.7} \\
sum_{i+1, j+1}(x^*(i, j+1))	&\ge& {\sf OPT}(i+1, j+1). \label{eq:ms2.8}
\end{eqnarray}
Then, by (\ref{eq:ms2.3}), (\ref{eq:ms2.7}), (\ref{eq:ms2.8}) and the definitions of $sum_{i+1, j}(x^*(i+1, j)) = {\sf OPT}(i+1, j)$ and $sum_{i, j+1}(x^*(i, j+1)) = {\sf OPT}(i, j+1)$, we obtain
\begin{eqnarray}
{\sf OPT}(i, j) - {\sf OPT}(i+1, j) \le {\sf OPT}(i, j+1) - {\sf OPT}(i+1, j+1),
\end{eqnarray}
which implies that the lemma holds in Case 1.
\qed
\end{proof}

\begin{thm}
The minisum $k$-sink location problem in a dynamic path network with uniform capacity can be solved in $O(n^2 \cdot \min \{ k, 2^{\sqrt{\log k \log \log n}}\})$ time.
\label{thm:ms1}
\end{thm}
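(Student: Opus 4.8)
The plan is to derive the theorem as a direct consequence of the reduction to the minimum $k$-link path problem, combined with the two results already in hand. First I would recall why $\mathrm{Q_{minisum}}(P)$ is equivalent to finding a minimum-weight path from $u_1$ to $u_{n+1}$ using exactly $k$ edges in the DAG $G = (N, A)$. Since evacuation paths never cross in an optimal evacuation, the set of vertices routed to any single sink forms a contiguous block; hence any feasible $k$-sink assignment corresponds to a partition of $[v_1, v_n]$ into $k$ consecutive intervals $[v_{j_{t-1}}, v_{j_t - 1}]$ with $1 = j_0 < j_1 < \cdots < j_k = n+1$, and the total cost is $\sum_{t=1}^{k} {\sf OPT}(j_{t-1}, j_t)$. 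This sum is exactly the weight of the $k$-link path through the edges $(u_{j_{t-1}}, u_{j_t})$, so minimizing over admissible partitions coincides with minimizing over $k$-link paths in $G$.

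Next I would check that $G$ satisfies the hypothesis required by Schieber's algorithm \cite{s98}, namely the concave Monge property ${\sf OPT}(i, j) + {\sf OPT}(i+1, j+1) \le {\sf OPT}(i+1, j) + {\sf OPT}(i, j+1)$ for all integers $i, j$ with $1 \le i+1 < j \le n$. This is precisely Lemma~\ref{lem:ms2}, which I may assume. Granting it, Schieber's algorithm finds a minimum-weight $k$-link path while issuing only $O(n \cdot \min\{k, 2^{\sqrt{\log k \log \log n}}\})$ edge-weight queries, with the remaining (non-query) overhead of the algorithm subsumed by this same bound.

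Finally I would account for the cost of answering a single query. Each query requests ${\sf OPT}(i, j)$, the optimal cost of the minisum $1$-sink location problem on $[v_i, v_{j-1}]$, which by Lemma~\ref{lem:ms1} is computed in $O(n)$ time. Multiplying the number of queries by the per-query cost gives a total running time of $O(n) \cdot O(n \cdot \min\{k, 2^{\sqrt{\log k \log \log n}}\}) = O(n^2 \cdot \min\{k, 2^{\sqrt{\log k \log \log n}}\})$, which is the claimed bound. The genuinely substantive parts have already been isolated into Lemma~\ref{lem:ms2}, establishing the concave Monge inequality, and Lemma~\ref{lem:ms1}, answering each $1$-sink query in linear time; given these, the theorem is merely the composition of the reduction, the Monge property, and Schieber's query bound. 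Accordingly, the only point I expect to need care is confirming that the non-query overhead of Schieber's procedure does not exceed the query-driven term, after which the complexity follows by the multiplication above.
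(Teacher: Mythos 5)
Your proposal is correct and follows essentially the same route as the paper: reduce $\mathrm{Q_{minisum}}(P)$ to the minimum $k$-link path problem in the complete DAG with edge weights ${\sf OPT}(i,j)$, invoke Schieber's algorithm whose applicability rests on the concave Monge property (Lemma~\ref{lem:ms2}), and charge $O(n)$ per weight query via Lemma~\ref{lem:ms1}, yielding the stated $O(n^2 \cdot \min \{ k, 2^{\sqrt{\log k \log \log n}}\})$ bound.
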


\section{Conclusion}
In this paper, we study the $k$-sink location problem in dynamic path networks with continuous model assuming that edge capacity is uniform and sinks can be located at any point in the network, and prove that the minimax problem can be solved in $O(kn)$ time and the minisum problem can be solved in $O(n^2 \cdot \min \{ k, 2^{\sqrt{\log k \log \log n}}\})$ time.

On the other hand, we leave as an open problem to reduce the time bound to $O(kn)$ for the minisum problem,
and extend the solvable networks into dynamic path networks with general capacities or more general networks (e.g., trees).

\clearpage

\end{document}